\newif\ifLIPICS
\newif\ifDRAFT 
\title{
Lower Bounds for Pseudo-Deterministic Counting in a Stream
}
    \author{Vladimir Braverman}{Rice University}{vb21@rice.edu}{}{Work partially supported by ONR Award N00014-18-1-2364 and NSF awards 1652257, 1813487 and 2107239.} 
    \author{Robert Krauthgamer}{Weizmann Institute of Science}{robert.krauthgamer@weizmann.ac.il}{https://orcid.org/0009-0003-8154-3735}
    {Work partially supported by ONR Award N00014-18-1-2364,
      by a Weizmann-UK Making Connections Grant,
      by a Minerva Foundation grant,
      and the Weizmann Data Science Research Center.} 
    \author{Aditya Krishnan}{Pinecone}{aditya@pinecone.io}{}{Work partially done while the author was at Johns Hopkins University and supported by the MINDS Data Science Fellowship.} 
    \author{Shay Sapir}{Weizmann Institute of Science \and \url{https://sites.google.com/view/shaysapir}}{shay.sapir@weizmann.ac.il}{https://orcid.org/0000-0001-7531-685X}{This research was partially supported by the Israeli Council for Higher Education (CHE) via the Weizmann Data Science Research Center.}
    \authorrunning{V. Braverman, R. Krauthgamer, A. Krishnan and S. Sapir} 
    \keywords{streaming algorithms, pseudo-deterministic, approximate counting}
    \author{
    Vladimir Braverman%
    \thanks{Work partially supported by ONR Award N00014-18-1-2364 and NSF awards 1652257, 1813487 and 2107239}\\ Rice University \\ \texttt{vb21@rice.edu} \and 
    Robert Krauthgamer%
    \thanks{Work partially supported by ONR Award N00014-18-1-2364,
      by a Weizmann-UK Making Connections Grant,
      by a Minerva Foundation grant,
      and the Weizmann Data Science Research Center.
      } \\ Weizmann Institute of Science \\ \texttt{robert.krauthgamer@weizmann.ac.il} \and 
    Aditya Krishnan%
    \thanks{Work partially done while the author was at Johns Hopkins University and supported by the MINDS Data Science Fellowship.} \\Pinecone \\ \texttt{aditya@pinecone.io} \and 
    Shay Sapir%
    \thanks{This research was partially supported by the Israeli Council for Higher Education (CHE) via the Weizmann Data Science Research Center.} \\ Weizmann Institute of Science \\ \texttt{shay.sapir@weizmann.ac.il} }
    \date{}
\begin{document}
\maketitle

\normalem

\begin{abstract}
Many streaming algorithms provide only a high-probability relative approximation. These two relaxations, of allowing approximation and randomization, seem necessary -- for many streaming problems, both relaxations must be employed simultaneously, to avoid an exponentially larger (and often trivial) space complexity. A common drawback of these randomized approximate algorithms is that independent executions on the same input have different outputs, that depend on their random coins. \emph{Pseudo-deterministic} algorithms combat this issue, and for every input, they output with high probability the same ``canonical'' solution.

We consider perhaps the most basic problem in data streams, of counting the number of items in a stream of length at most $n$. Morris's counter [CACM, 1978] is a randomized approximation algorithm for this problem that uses $O(\log\log n)$ bits of space, for every fixed approximation factor (greater than $1$). Goldwasser, Grossman, Mohanty and Woodruff [ITCS 2020] asked whether pseudo-deterministic approximation algorithms can match this space complexity. Our main result answers their question negatively, and shows that such algorithms must use $\Omega(\sqrt{\log n / \log\log n})$ bits of space.

Our approach is based on a problem that we call \emph{Shift Finding}, and may be of independent interest. In this problem, one has query access to a shifted version of a known string $F\in\{0,1\}^{3n}$, which is guaranteed to start with $n$ zeros and end with $n$ ones, and the goal is to find the unknown shift using a small number of queries. We provide for this problem an algorithm that uses $O(\sqrt{n})$ queries. It remains open whether $\poly(\log n)$ queries suffice; if true, then our techniques immediately imply a nearly-tight $\Omega(\log n/\log\log n)$ space bound for pseudo-deterministic approximate counting.

\end{abstract}

\section{Introduction}
\label{sec:intro} 

Computing over data streams is a rich algorithmic area
that has developed enormously, 
and actually started with the simple-looking problem 
of approximate counting~\cite{DBLP:journals/cacm/Morris78a}. 
Let us first recall the streaming model:
The input is a stream, i.e., a sequence of items, 
and the goal is to compute a pre-defined function of these items, 
such as the number of items (or number of the distinct items),
while making one sequential pass over the stream (or sometimes a few passes). 
Many useful functions actually depend on the items as a multiset,
i.e., ignoring their order, or even only on their frequencies 
(like the famous $\ell_p$-norm of the frequency vector).
Another possible goal is to produce a sample, rather than computing a function,
e.g., to produce a uniformly random item.

The primary measure of efficiency for streaming algorithms
is their space complexity,
and for many problems, researchers have designed space-efficient algorithms, 
often with space complexity that is even polylogarithmic in the input size.
However, this comes at a price ---
these algorithms are usually randomized (and not deterministic)
and/or compute an approximate solution (rather than exact one).
In fact, oftentimes both relaxations are needed in order to achieve low space complexity.  
For example, to count the number of items in a stream of length at most $n$,
there is a
randomized approximation algorithm 
using $O(\loglog n)$ bits of space,
but algorithms that are exact or deterministic must use $\Omega(\log n)$ bits~\cite{DBLP:journals/cacm/Morris78a}. 
Another example is the $\ell_2$-norm of the frequency vector of items from a ground set $[d]$
(or equivalently, of a $d$-dimensional vector under a sequence of additive updates) 
---
there is a randomized approximation algorithm that uses $O(\log d)$ bits of space,
but algorithms that are exact or deterministic must use $\Omega(d)$ bits of space~\cite{DBLP:conf/stoc/AlonMS96}.

Gat and Goldwasser~\cite{DBLP:journals/eccc/GatG11} initiated
the study of \emph{pseudo-deterministic} algorithms,
which informally means that when run (again) on the same input,
with high probability they produce exactly the same output.
This notion combats a potential issue with randomized algorithms,
that independent executions on the same input might return different outputs,
depending on the algorithm's coin tosses.
Many known streaming algorithms suffer from this issue,
which is a serious concern for some users and applications. 
Pseudo-deterministic algorithms were later considered in the streaming model
by Goldwasser, Grossman, Mohanty and Woodruff~\cite{DBLP:conf/innovations/GoldwasserGMW20},
and these are formally defined as follows.

\begin{definition}\label{def:PD_streaming}
A streaming algorithm $A$ is \emph{pseudo-deterministic} (PD)
if there is a function $F(\cdot)$ defined on inputs of $A$ (streams),
such that for every stream $\sigma$,
$$
  \P [A(\sigma)=F(\sigma)] \geq 9/10,
$$
where the probability is over the random choices of the algorithm.
We shall refer to $F$ as the \emph{canonical function} of algorithm $A$.%
\footnote{
The canonical function $F$ depends on the order arrival of the stream items.
In an alternative definition, the canonical function depends on the items only as a multiset, i.e., ignoring their order in the stream. 
These two definitions are equivalent in the setting of approximate counting, which is the focus of our work.
}
\end{definition}

We focus on
\emph{estimation problems},
which ask to approximate a numerical value, 
and are very popular in the streaming model.
For such problems, the notion of PD relaxes the exact setting and the deterministic one,
since exact algorithms have one canonical output (the exact numerical value), and hence they are PD.
Thus the known lower bounds for these settings do not apply for PD algorithms,
and a central question, identified in~\cite{DBLP:conf/innovations/GoldwasserGMW20}, remains open:

\begin{center}
  \emph{Are there efficient PD streaming algorithms for estimation problems?}
\end{center}

Currently, no lower bounds are known for natural
estimation problems,
although for several search problems,
like reporting an element from a stream with deletions
(equivalently, an index from the support of the frequency vector), 
it is known that lower bounds for deterministic algorithms
extend to PD algorithms~\cite{DBLP:conf/innovations/GoldwasserGMW20}.

\subsection{Main Result: Approximate Counting}

Perhaps the most basic problem in the streaming model is to count the number of stream items.
Exact counting, i.e., computing the number of items exactly, 
requires $\Theta(\log n)$ bits of space when the stream has length at most $n$,
even for randomized algorithms with some error probability.
Work by Morris~\cite{DBLP:journals/cacm/Morris78a}, later refined in~\cite{DBLP:journals/bit/Flajolet85,DBLP:journals/mst/GronemeierS09,NelsonYu22},
showed that the number of stream items
can be $(1+\epsilon)$-approximated with probability $9/10$
using $O_\epsilon(\loglog n)$ bits of space,
where $\epsilon>0$ is arbitrary but fixed.
Throughout, we refer to multiplicative approximation,
and use the notations $O_c(\cdot)$ and $\Omega_c(\cdot)$
to hide factors that are polynomial in $c$. 
Morris's algorithm has found many applications,
both in theory and in practice~\cite{DBLP:journals/corr/abs-1805-00612_Lumbroso,NelsonYu22}.
An open question stated explicitly by Goldwasser, Grossman, Mohanty and Woodruff~\cite{DBLP:conf/innovations/GoldwasserGMW20} is 
whether there is a PD algorithm for this problem using $O(\loglog n)$ bits of space.
We answer their question negatively, by proving the following lower bound. 

\begin{theorem}[Main Result]\label{cor:LB_PD_counting}
For every $c,n>1$,
every PD streaming algorithm that $c$-approximates the number of items in a stream of length at most $(c+1)n$
must use $\Omega_c(\sqrt{{\log n}/{\loglog n}})$ bits of space.
\end{theorem}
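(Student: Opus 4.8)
The plan is to pass from the algorithm to its \emph{canonical function}, extract a rigid combinatorial structure from the approximation guarantee, and then invoke the Shift Finding algorithm to show that this structure cannot coexist with small space.

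\emph{Step 1 (reduce to a one-dimensional canonical function and a Markov chain).} Since all items in a counting stream are identical, a PD algorithm $A$ with space $s$ is a time-homogeneous Markov chain on its state set $S$, $|S|=2^{s}$, started from a fixed state; write $\mu_{t}$ for the state distribution after $t$ items and $g$ for the map from a final state to an output. By Definition~\ref{def:PD_streaming} the canonical function depends only on the stream length, so there is a function $f$ on $\{1,\dots,(c+1)n\}$ with $\P_{\sigma\sim\mu_{t}}[g(\sigma)=f(t)]\ge 9/10$. As $f$ is a $c$-approximation, $f(t)\in[t/c,ct]$, so evaluating along the geometric sequence $1,\lceil 2c^{2}\rceil,\lceil 2c^{2}\rceil^{2},\dots$ forces $f$ to take $m=\Omega_{c}(\log n)$ distinct values $v_{1}<\dots<v_{m}$, attained on consecutive windows $I_{1},\dots,I_{m}$ that partition $\{1,\dots,(c+1)n\}$ and whose lengths grow geometrically. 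Two elementary facts will be used later: (i) distinct values of $f$ have disjoint $g$-preimages, so $\|\mu_{s}-\mu_{t}\|_{\mathrm{TV}}\ge 4/5$ whenever $s\in I_{i}$ and $t\in I_{j}$ with $i\ne j$; and (ii) one step of the chain contracts total variation, so $t\mapsto\|\mu_{t}-\mu_{t+1}\|_{\mathrm{TV}}$ is non-increasing and hence $\Omega(1)$ for every $t$ below the last window --- the trajectory $(\mu_{t})$ keeps moving at every single step while visiting $m$ pairwise-far ``levels'' in geometrically growing time.

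\emph{Step 2 (the crux: bound the number of levels by the space via Shift Finding).} The lemma I would aim for states that any such PD counter must satisfy $s=\widetilde\Omega\!\big(m/Q(m)\big)$, where $Q(L)$ denotes the query complexity of Shift Finding on length-$L$ strings and $\widetilde\Omega$ hides $\poly\loglog$ factors. The guiding intuition is that a state $\sigma$ reached at an unknown time $t\in I_{i}$ is itself a noisy oracle for a \emph{fixed} string shifted by $i$: simulating the chain forward on geometrically many further items and reading $g$ reveals, with per-sample error $1/10$, which level $f$ occupies a prescribed multiplicative factor ahead of $t$, hence the suffix $(v_{i},v_{i+1},v_{i+2},\dots)$ of levels still to come; after coarsening, this is a string beginning with a run that encodes ``level $\le i$'' and ending with a run that encodes the top level --- exactly the promise form of Shift Finding, of length $\Theta(m)$ (or $\Theta(m/\loglog m)$ once we thin to well-separated windows). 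Running the $Q(\cdot)$-query Shift Finding algorithm against this oracle, with $O(\log m)$ repetitions per query to drive the noise below $1/\poly(m)$ and union-bound over queries, one can navigate the whole level hierarchy starting from any single state using only $\widetilde O(Q(m))$ forward simulations. Converting this navigability into a space lower bound is where I expect the real difficulty: one must argue that $2^{s}$ states cannot realize $m$ levels nested this rigidly unless each level affords, in state count, the branching that $Q(m)$ queries can resolve, giving $s=\widetilde\Omega(m/Q(m))$. A plausible way to organize the argument is an inductive ``peeling'' of windows --- the chain restricted to the top window's level set is again a PD timer for a sub-stream of length $\Theta(n)$ --- with the Shift Finding algorithm supplying precisely the quantitative gain that keeps the induction from collapsing to the trivial $s=\Omega(\log m)$.

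\emph{Step 3 (conclude).} Plugging the Shift Finding algorithm's bound $Q(L)=O(\sqrt L)$ into the lemma of Step 2 together with $m=\Omega_{c}(\log n)$ yields $s=\widetilde\Omega_{c}(\sqrt{\log n})$, and accounting for the $\loglog n$ losses incurred by thinning to separated windows and by the per-query repetitions gives the claimed $\Omega_{c}(\sqrt{\log n/\loglog n})$ bits; moreover, should Shift Finding turn out to need only $Q(L)=\poly(\log L)$ queries, the same chain of reasoning would immediately upgrade the bound to $\Omega_{c}(\log n/\loglog n)$.
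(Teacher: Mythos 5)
Your proposal stands or falls on the lemma announced in Step 2, namely $s=\widetilde\Omega\bigl(m/Q(m)\bigr)$, and that lemma is never proved: you yourself flag that ``converting this navigability into a space lower bound is where I expect the real difficulty'' and offer only a ``plausible'' inductive peeling. This is precisely the part that cannot be waved through, and it is also where your quantitative picture diverges from what is actually provable by the reduction you gesture at. The mechanism that does work (and is the paper's) is a one-way communication argument: Alice runs the PD algorithm on $s^*$ items and sends its state; Bob, by inserting further items into fresh copies of an amplified version of the algorithm, gets noisy query access to the shifted canonical string $F_{s^*}$ of length $\Theta(cn)$ (a bit string, not your length-$m$ ``levels'' string), and simulating a $q$-query Shift Finding algorithm lets him recover $s^*$, so the space is $\Omega(\log n/\log q)$ --- the $\log n$ coming from the size of the message alphabet and the $\log q$ from the amplification overhead. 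With this correct dependence, plugging in the $O(\sqrt{cn})$-query Shift Finding algorithm gives only $\Omega(\log n/\log n)=\Omega(1)$; the paper explicitly notes that its Shift Finding algorithm is too weak to be used this way, so your Step 3, which obtains $\sqrt{\log n}$ by inserting $Q(L)=O(\sqrt L)$ into your unproved $m/Q(m)$ formula with $m=\Theta(\log n)$, is not a consequence of any established connection; the numerical agreement with the theorem appears coincidental.

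The paper instead proves the bound by a case analysis on the canonical function $F$ (for the promise problem $\prac$, where $F$ is a $0/1$ string with $n$ leading zeros and $n$ trailing ones). If for some $k$ the pattern ``$0?^{k-1}1$'' occurs at most $t$ times in $F$, a binary search over a grid of step $k$ locates such an occurrence in $O(\log(cn))$ queries, which partitions the $n$ possible shifts into buckets of size at most $t$; picking one representative per bucket gives a MESSAGE instance with alphabet size $n/t$, yielding $\Omega\bigl(\log(n/t)/\loglog(cn)\bigr)$ space. If every such pattern occurs at least $t$ times, then $O\bigl((cn/t)\log n\bigr)$ uniformly random queries distinguish the true shift from every other shift (each wrong shift is killed with probability $\ge t/(cn)$ per query), and the communication connection gives $\Omega\bigl(\log n/\log((cn/t)\log n)\bigr)$. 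Balancing $t=n/2^{\sqrt{\log n\,\loglog(cn)}}$ gives $\Omega_c(\sqrt{\log n/\loglog n})$. Your proposal contains no analogue of this dichotomy, which is the actual source of the $\sqrt{\cdot}$ trade-off. Two further, smaller problems: your fact (ii) in Step 1 is false as stated --- the data-processing inequality makes $t\mapsto\|\mu_t-\mu_{t+1}\|_{\mathrm{TV}}$ non-increasing, but within a long window where the canonical value is constant the step distance can be as small as $\Theta(1/|I_i|)$, so it need not be $\Omega(1)$; and the claim that the $m$ distinct canonical values are attained on consecutive windows partitioning $[1,(c+1)n]$ is an unjustified structural assumption on $f$.
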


To be more precise, our lower bound is actually 
$\Omega\big(\tfrac{\log n}{\sqrt{\log n\loglog (cn)} + \log c}\big)$,
which is still $\Omega(\sqrt{\tfrac{\log n}{\loglog n}})$
as long as $c<2^{\sqrt{\log n \loglog n}}$. 
Previously, there was a large gap for this problem,
between $O(\log n)$ bits (by a deterministic algorithm)
and $\Omega(\loglog n)$ bits (from the randomized setting)~\cite{NelsonYu22}.
See \Cref{table:Approx_Count_Bounds} for a summary of the known bounds.

\begin{table*}[t] 
\caption{
\label{table:Approx_Count_Bounds} 
Known space bounds (in bits) for $2$-approximate counting in a stream of length at most $n$.
Folklore bounds are stated without a reference. 
}
\begin{center}
\begin{tabulary}{\textwidth}{|L  | rl | rl|}
\hline
Algorithms
& \multicolumn{2}{c|}{Upper bound}
&  \multicolumn{2}{c|}{Lower bound}
\\ 
\hline
Exact or deterministic
& $O(\log n)$ &
& $\Omega(\log n)$ & \\ 
Randomized and approximate
& $O(\loglog n)$ & \cite{DBLP:journals/cacm/Morris78a}
& $\Omega(\loglog n)$ & \cite{NelsonYu22}
\\
Pseudo-deterministic
& $O(\log n)$  & 
& $\Omega(\sqrt{{\log n}/{\loglog n}})$ & [Thm.~\ref{cor:LB_PD_counting}] \\
Pseudo-deterministic &               &
& $\Omega(\log n)$ & \cite{grossman2023tight} \\
\hline
\end{tabulary}
\end{center}
\end{table*}

Our proof analyzes the promise variant 
of $c$-approximate counting for streams of length at most $(c+1)n$,
which we denote by $\prac$;
this variant asks to distinguish whether the number of stream items
is $\leq n$ or $>cn$ (see Definition \hyperref[sec:preliminaries]{2.1}). 
A crucial property of PD algorithms is that they
have to be PD also for inputs in the range $[n+1,cn]$ (i.e., outside the promise).
We rely on this property of PD algorithms to prove the following result,
which immediately yields \Cref{cor:LB_PD_counting} as a corollary.

\begin{theorem}[Main Result]\label{thm:LB_PD_counting}
For every $c,n>1$, 
every PD streaming algorithm for problem $\prac$ must use $\Omega_c(\sqrt{ {\log n}/{\loglog n}})$ bits of space.
\end{theorem}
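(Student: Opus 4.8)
The plan is to reduce the problem to \emph{Shift Finding} and invoke its $O(\sqrt{N})$-query algorithm; throughout, restrict attention to streams consisting of $t$ copies of one fixed symbol, so that a stream is determined by its length~$t$. The crucial first observation is that pseudo-determinism gives information \emph{off} the promise: if $A$ is a PD algorithm for $\prac$ using $s$ bits, then its canonical function $F$ is defined and enforced on \emph{every} stream, including those whose length lies in the ``gray zone'' $(n,cn]$. Hence $A$ induces a \emph{profile} $f\colon\{0,1,\dots,(c+1)n\}\to\{0,1\}$ with $f(t)=0$ for $t\le n$, $f(t)=1$ for $t>cn$, and such that on any stream of length $t$ in this range $A$ outputs $f(t)$ with probability at least $9/10$. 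Writing $p_t$ for the probability that $A$ declares ``large'' after reading $t$ items, this says $p_t\in[0,\tfrac1{10}]\cup[\tfrac9{10},1]$ for all $t\in[0,(c+1)n]$, with $p_t\le\tfrac1{10}$ on $[0,n]$ and $p_t\ge\tfrac9{10}$ on $(cn,(c+1)n]$: the estimator is never allowed to be genuinely uncertain. After amplification I view $A$ on this input as a randomized automaton with $D=2^{s}$ states, whose state distribution evolves by a fixed stochastic matrix $M$; by Cayley--Hamilton, $p_t$ — and, more generally, the probability of declaring ``large'' after $k$ further items conditioned on the current state — satisfies a linear recurrence of order at most~$D$.

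\textbf{The reference string.} The heart of the argument is to show that the behaviour of $A$ over the horizon $[0,(c+1)n]$ is ``block-structured''. Combining the order-$D$ linear-recurrence structure of the suffix-behaviours with the fact that $p_t$ never enters the forbidden band $(\tfrac1{10},\tfrac9{10})$, I claim that there is a \emph{reference string} $R\in\{0,1\}^{N}$ — beginning with a long run of $0$'s and ending with a long run of $1$'s, of length $N=N(s)=2^{\mathrm{poly}(s)}$ — such that the profile is read off $R$ by a shift: running $A$ forward from the state reached after $L$ items and probing whether it declares ``large'' after $k$ further items returns, after amplification, the bit of $R$ at position $\mathrm{offset}(L)+k$, where $\mathrm{offset}(L)$ depends only on the reached configuration. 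Thus \emph{determining one's position in the stream is an instance of Shift Finding on the known string $R$}: the unknown shift is $\mathrm{offset}(L)$, and query access to the shifted copy is provided by running $A$ forward.

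\textbf{Concluding with the Shift Finding bound.} Apply the $O(\sqrt{N})$-query Shift Finding algorithm (the upper bound stated above) to this instance; it recovers $\mathrm{offset}(L)$ using $O(\sqrt{N})$ forward probes of $A$. Since the profile must contain a run of $n$ consecutive $0$'s followed — after the signature on $(n,cn]$ — by a run of $n$ consecutive $1$'s, and the block-structure forces all of this to occur within one length-$N$ block, we get $(c+1)n=O(N)$, hence $n\le 2^{O(s^{2}\log s)}$, which rearranges to $s=\Omega_{c}\big(\sqrt{\log n/\loglog n}\big)$; tracking the $c$-dependence of the amplification and of $N(s)$ gives the sharper $\Omega\big(\tfrac{\log n}{\sqrt{\log n\,\loglog(cn)}+\log c}\big)$. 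The query complexity of Shift Finding is exactly the lever here: $q(N)=O(\sqrt{N})$ forces $N(s)=2^{\Theta(s^{2}\log s)}$, whereas a $\poly(\log N)$-query Shift Finding algorithm would force $N(s)=2^{\Theta(s\log s)}$ and hence the conjectured near-tight $\Omega(\log n/\loglog n)$.

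\textbf{Main obstacle.} The delicate step is the reference-string claim: proving that a $D$-state \emph{randomized} automaton that is PD on \emph{every} length up to $(c+1)n$ realizes, over that horizon, only $2^{\mathrm{poly}(s)}$ essentially-distinct forward behaviours that fit together as shifts of a single reference string of controlled length. This requires marrying the spectral / linear-recurrence structure of the $D\times D$ transition matrix to the ``$p_t$ avoids $(\tfrac1{10},\tfrac9{10})$ everywhere'' consequence of pseudo-determinism — and it is the latter (not mere correctness on the promise) that rules out the estimator drifting slowly through intermediate confidence levels and is what makes the block-structure true at all. Getting the parameters of this structural claim into the precise form that, combined with the $O(\sqrt{N})$ Shift Finding algorithm, yields $\Omega_{c}\big(\sqrt{\log n/\loglog n}\big)$ rather than a weaker power of $\log n$ is the quantitative core of the proof.
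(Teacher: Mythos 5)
Your opening insight is exactly the right one, and it matches the paper's: pseudo-determinism off the promise yields a canonical function $F$ that starts with $n$ zeros and ends with $n$ ones, and running the streaming algorithm forward from the state reached after $s^*$ items gives (after amplification) query access to the shifted string $F_{s^*}$, so locating one's position in the stream becomes Shift Finding. However, the way you try to extract a quantitative bound from this is where the argument breaks. The mechanism by which a $q$-query Shift Finding algorithm gives a space bound is a one-way communication (MESSAGE) argument: Alice sends the state of $O(\log q)$ amplified copies, Bob simulates the $q$ queries, and the $\Omega(\log n)$ MESSAGE bound yields space $\Omega(\log n/\log q)$. Plugging in the generic $q=O(\sqrt{cn})$ algorithm gives $\log q=\Theta(\log n)$ and hence only a trivial $\Omega(1)$ bound -- the paper explicitly notes that its $O(\sqrt{cn})$-query algorithm is \emph{not} strong enough to imply any nontrivial lower bound. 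Your step ``$q(N)=O(\sqrt{N})$ forces $N(s)=2^{\Theta(s^2\log s)}$'' has no supporting mechanism; the exponent $s^2\log s$ is reverse-engineered from the target bound rather than derived. Moreover, your route routes everything through a ``reference string'' $R$ of length $2^{\mathrm{poly}(s)}$ obtained from Cayley--Hamilton/block-structure considerations on the $2^s$-state automaton; you yourself flag this as unproven, and if you could prove $(c+1)n=O(N)=2^{\mathrm{poly}(s)}$ you would not need Shift Finding at all (that style of argument is closer to the independent Markov-chain proof of Grossman--Gupta--Sellke, which gets $\Omega(\log n)$ by entirely different means). In the paper no such automaton structure is needed: the reference string is simply $F$ itself, of length $(c+1)n$.

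What actually carries the quantitative weight in the paper, and is absent from your proposal, is a case analysis on the canonical function $F$ with a threshold $t$. If for some $k$ the pattern ``$0?^{k-1}1$'' (a $0$ followed $k$ positions later by a $1$) occurs at most $t$ times in $F$, then a binary search on the grid of step $k$ finds such an occurrence in $F_{s^*}$ with only $O(\log(cn))$ queries, and a bucketing argument turns this into a MESSAGE instance over an alphabet of size $\geq n/t$, giving space $\Omega\bigl(\log(n/t)/\loglog(cn)\bigr)$ (\Cref{cor:scenario_1}). If instead every such pattern occurs at least $t$ times, then $O\bigl((cn/t)\log n\bigr)$ uniformly random queries distinguish $s^*$ from every other shift, giving an $F$-specific Shift Finding algorithm and hence, via \Cref{thm:single_F_connection_shiftFind}, space $\Omega\bigl(\log n/\log((cn\log n)/t)\bigr)$ (\Cref{lem:scenario_2}). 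Balancing the two cases at $t=n/2^{\sqrt{\log n\,\loglog(cn)}}$ is what produces $\Omega_c\bigl(\sqrt{\log n/\loglog n}\bigr)$. So the gap is twofold: your structural ``block-structure'' lemma is unestablished and not the right lever, and the generic $O(\sqrt{cn})$-query Shift Finding algorithm you invoke cannot, through any known reduction, yield the stated bound.
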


Our proof of Theorem~\ref{thm:LB_PD_counting}
appears in Section~\ref{sec:LB_PD_counting}.
It is based on a problem that we call Shift Finding,
which may be of independent interest, 
as it is very natural and likely to find connections to other problems.
In addition, it can potentially lead to a near-tight
$\Omega(\log n/\loglog n)$ lower bound for PD streaming,
by simply improving our algorithmic result for Shift Finding. 
A very recent independent work by
Grossman, Gupta and Sellke~\cite{grossman2023tight} 
shows a tight $\Omega(\log n)$ bound for $\prac$,
using a very different technique,
which views the PD streaming algorithm as a Markov chain with a limited number of states.

\subsection{Main Technique: The Shift Finding Problem}

Our main result relies on \emph{algorithms} for the shift Finding problem $\prsf$, which is defined below. 
Let us first introduce some basic terminology.
A function $F:[m]\to\{0,1\}$ can also be viewed as a string $F\in\{0,1\}^m$, 
and vice versa, and we sometimes use these interchangeably. 
Given $s\in[0,n]$, let the shifted version of this $F$
be the function $F_{s}: x\mapsto F(s+x)$, with a properly restricted domain,
see \Cref{sec:preliminaries}.

\begin{definition}[Shift Finding] \label{def:shift_finding}
Let $c,n>1$.
In problem $\prsf$,
the input is a string $P\in \{0,1\}^{(c-1)n}$,
and one has query access to a string $F_{s^*}$ that is the concatenation
of $n-s^*$ zeros, then $P$, and finally $s^*$ ones, 
for an unknown $s^*\in [0,n]$. 
Thus, a query for $x\in[0,cn]$ returns $F_{s^*}(x)$.
The goal is to output $s^*$. 
\end{definition}

The measure of complexity of an algorithm for this problem
is the number of queries that it makes to $F_{s^*}$.
A randomized algorithm is required to be correct (in its output $s^*$)
with probability $9/10$.

This problem may be also of independent interest.
In a different variant of shift finding,
the input is a random string $c\in\set{0,1}^n$ and a vector $x$ that is obtained from the string $c$ by a cyclic shift $\tau$ and some noise (random bit flips),
and the goal is to compute the shift $\tau$ with high probability.
This problem is related to GPS synchronization, see~\cite{DBLP:conf/mobicom/HassaniehAKI12,DBLP:conf/soda/AndoniIKH13} for more details.
There is a sublinear time algorithm for this problem, running in time roughly $O(n^{0.641})$~\cite{DBLP:conf/soda/AndoniIKH13}.
One main difference is that in our \Cref{def:shift_finding},
one string is completely known to the algorithm, and the only concern is the number of queries to the second string.

\subsubsection{Connection to PD Counting}

We show that an algorithm for Shift Finding ($\prsf$) implies
a space lower bound for PD streaming algorithm for counting ($\prac$).

\begin{theorem}\label{thm:connection_Shift_Counting}
Let $c,n>1$, and suppose that the Shift Finding problem $\prsf$
admits a randomized algorithm that makes at most $q=q(c,n)$ queries (possibly adaptive).
Then, every PD streaming algorithm for the approximate counting problem $\prac$
must use $\Omega(\tfrac{\log n}{\log q})$ bits of space.
\end{theorem}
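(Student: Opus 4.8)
The plan is to prove the statement by a one-way communication / encoding argument: a space-$s$ PD algorithm for $\prac$ lets a sender transmit a number in $\{0,\dots,n\}$ to a receiver using only $O(s\log q)$ bits, which forces $s=\Omega(\log n/\log q)$. First I would record the structural link between the two problems. Fix a PD algorithm $A$ for $\prac$ of space $s$, and amplify it by $O(1)$ parallel copies with a majority vote so that its error probability is a small constant $\beta$ (this costs only a constant factor in space and preserves the canonical function). Since $A$ is PD for $\prac$, its canonical answer depends only on the number of items (the footnote identifies the two notions of canonical function for counting), so it defines a string $\mathcal{F}\in\{0,1\}^{(c+1)n+1}$ with $\mathcal{F}(k)=0$ for $k\le n$ and $\mathcal{F}(k)=1$ for $k>cn$; here it is crucial that $A$ must be PD \emph{even on inputs outside the promise}, i.e.\ for $k\in(n,cn]$, so that $\mathcal{F}$ is well defined there, and then $\mathcal{F}$ has exactly the shape demanded by $\prsf$ — leading zeros, trailing ones, and a ``known middle'' $P=(\mathcal{F}(n+1),\dots,\mathcal{F}(cn))\in\{0,1\}^{(c-1)n}$ that is determined by $A$ alone. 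The second structural fact: if $A$ is resumed from a state it reaches after $m$ items and then fed $x$ more, its output equals $\mathcal{F}(m+x)=\mathcal{F}_m(x)$ with probability $\ge 1-\beta$, after averaging over the state distribution at time $m$. Thus a checkpointed state of $A$ after $m$ items is a noisy form of query access to the shifted string $\mathcal{F}_m$.

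Given this, the protocol is as follows. The sender holds $m^*\in\{0,\dots,n\}$; she runs the amplified $A$ on a length-$m^*$ stream a total of $t=\Theta(\log q)$ independent times and sends the $t$ resulting states, which is $O(s\log q)$ bits. The receiver knows $A$, hence knows $\mathcal{F}$ and in particular $P$, and runs the assumed $q$-query algorithm $\mathcal{S}$ for $\prsf$ on input $P$; he answers a probe $x\in[0,cn]$ by resuming each of the $t$ received states for $x$ further increments, taking for each state a majority over $O(1)$ \emph{fresh} continuations, and then a majority over the $t$ states. If every probe of $\mathcal{S}$ is answered according to $\mathcal{F}_{m^*}$, then $\mathcal{S}$ behaves exactly as it would with a perfect oracle for shift $m^*$, hence outputs $m^*$ with probability $\ge 9/10$ over its own coins, and the receiver outputs that value. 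A standard counting (Fano) argument converts ``$m^*$ is recovered with constant probability for a uniformly random $m^*$'' into $s\log q=\Omega(\log n)$. (All stream lengths involved are at most $m^*+cn\le(c+1)n$, so they lie in the domain of $\prac$.)

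The delicate point — and the part I expect to be the main obstacle — is showing that $t=\Theta(\log q)$ checkpoints already suffice for the receiver's noisy oracle to agree with $\mathcal{F}_{m^*}$ on \emph{all} of $\mathcal{S}$'s adaptively chosen probes with constant probability. A naive union bound would range over all $\Theta(cn)$ possible probe locations and would demand $t=\Theta(\log n)$ checkpoints, destroying the bound. The way around this is to condition on the coins of $\mathcal{S}$: for a good fixing of those coins (a $9/10$ fraction, since $\mathcal{S}$ is correct with probability $9/10$ on the true instance), as long as every answer so far matches $\mathcal{F}_{m^*}$ the algorithm $\mathcal{S}$ follows a \emph{fixed} length-$q$ probe trajectory $x_1,\dots,x_q$ with no residual adaptivity, so one only needs correctness on these $q$ fixed points. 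For a single fixed $x$, the PD guarantee gives that a uniformly random checkpoint is ``good for $x$'' (resuming from it returns $\mathcal{F}(m^*+x)$ with probability $\ge 0.9$) except with probability $O(\beta)$, and a Chernoff bound over the $t=\Theta(\log q)$ independent checkpoints plus the inner $O(1)$-fold majority makes the two-level majority correct at $x$ with probability $1-q^{-\Omega(1)}$; since the per-probe continuations use fresh coins, conditioning on the checkpoints makes the $q$ events independent, so a union bound over the $q$ trajectory points finishes it. Getting these nested estimates to compose cleanly — and tracking the resulting polynomial dependence on $c$, which is the source of the $\Omega_c$ in the statement — is the main technical work.
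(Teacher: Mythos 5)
Your proposal is correct and follows essentially the same route as the paper's proof (via Theorem~\ref{thm:single_F_connection_shiftFind}): amplify the PD counter by a $\Theta(\log q)$-fold majority, have the sender transmit the state(s) after $m^*$ insertions, let the receiver simulate the Shift Finding algorithm by resuming copies of the state, fix the Shift Finding algorithm's coins so its probe trajectory is non-adaptive and union-bound over the $q$ probes, and finish with a one-way lower bound for transmitting an element of $[0,n]$. The only differences are cosmetic: you amplify via $t=\Theta(\log q)$ independent checkpoints with a two-level majority on the receiver's side rather than a single $1-1/(10q)$-error algorithm $A'$, and you invoke a counting/Fano argument in place of the paper's MESSAGE lemma with shared randomness.
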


It immediately follows that if the Shift Finding problem $\prsf$ can be solved using $\polylog(n)$ queries (for fixed $c>1$), then PD approximate counting requires $\Omega(\tfrac{\log n}{\loglog n})$ bits of space.
However, our current upper bound for Shift Finding is $q=O(\sqrt{cn})$ queries (\Cref{thm:shift_finding_alg}) and is not strong enough to yield a nontrivial lower bound for PD approximate counting. 

Therefore, to prove our main lower bound (\Cref{thm:LB_PD_counting}),
we revert to a generalization of \Cref{thm:connection_Shift_Counting}
where the Shift Finding algorithm is still given an instance of problem $\prsf$
(namely, a string $F$ and query access to $F_{s^*}$), 
but reports a small set $R\subset[0,n]$ (say of size $|R|\leq t$) 
that contains the unknown shift (i.e., $s^*\in R$).
This algorithm may be randomized provided that it is PD,
and its canonical function maps each instance of problem $\prsf$
to a set $R$ of size $t$ that contains $s^*$.

\begin{theorem}\label{thm:generalized_connection_Shift_Counting}
Let $c,n>1$,
and suppose there is a PD algorithm $Q$ that,
given an instance of problem $\prsf$,
makes at most $q=q(c,n)$ queries (possibly adaptive) to $F_{s^*}$
and its canonical function $M$ maps the input to a set
$R\subset [0,n]$ of size $t=t_c(n)$ that contains $s^*$.
Then every PD streaming algorithm for problem $\prac$ must use
$\Omega(\tfrac{\log (n/t)}{\log q})$ bits of space.
\end{theorem}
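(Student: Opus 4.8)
The plan is to cut the streaming algorithm down to streams built from a single repeated item, read off the ``canonical profile'' it produces there, and recognize that profile as an instance of $\prsf$; then the hypothesized Shift Finding algorithm $Q$ forces the profile, and hence the algorithm's state space, to be large.

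\emph{Step 1: from a PD counter to a $\prsf$ instance.} Fix a PD streaming algorithm $A$ for $\prac$ using $s$ bits, and restrict attention to streams $1^\ell$ for $\ell\in[0,(c+1)n]$. On such streams $A$ is a time\nobreakdash-homogeneous Markov chain on its $\le 2^s$ states; let $v_\ell$ be its state distribution after $\ell$ items. By \Cref{def:PD_streaming} the output induced by $v_\ell$ is $9/10$\nobreakdash-concentrated on a single value $\beta[\ell]:=F_A(1^\ell)\in\{0,1\}$, and since $A$ solves $\prac$ we have $\beta[\ell]=0$ for $\ell\le n$ and $\beta[\ell]=1$ for $cn<\ell\le (c+1)n$. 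Hence the profile string $\beta$ has the form $0^{\,n+1}\,P\,1^{\,n}$, where $P:=\beta[\,n{+}1\,..\,cn\,]\in\{0,1\}^{(c-1)n}$ is a fixed string determined by $A$. Crucially, for every $s^*\in[0,n]$ the shifted string $x\mapsto\beta[s^*{+}x]$ coincides (up to the truncation fixed in \Cref{def:shift_finding}) with the Shift Finding string $F_{s^*}=0^{\,n-s^*}P\,1^{\,s^*}$, so $(P,\,\text{query access to }\beta_{s^*})$ is a genuine instance of $\prsf$; moreover a query to $F_{s^*}$ at position $x$ can be answered, correctly with probability $9/10$ and independently across queries, by running a fresh copy of $A$ on $1^{s^*+x}$ and reading its output.

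\emph{Step 2: the space bound via $Q$.} The theorem is equivalent to the statement that an $s$\nobreakdash-bit PD algorithm for $\prac$ has ``reach'' at most $t\cdot q^{O(s)}$, i.e.\ $n\le t\cdot q^{O(s)}$. To prove this I would argue by induction on the number of state bits. The base case is that an $O(1)$\nobreakdash-bit PD algorithm has reach $O(t)$ (indeed $O(1)$): a constant\nobreakdash-size probabilistic counter is forced, by the ``always $9/10$\nobreakdash-concentrated'' requirement of pseudo\nobreakdash-determinism, to pass through a $50/50$ region before switching from answer $0$ to answer $1$, and so cannot produce the $n{+}1$ concentrated $0$'s followed by $n$ concentrated $1$'s required by $\prac$ once $n$ exceeds a constant. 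For the inductive step, a copy of $Q$ --- invoked on an instance whose query access to $\beta_{s^*}$ is itself simulated by $A$ --- localizes $s^*$ to a set of size $t$ using only $q$ queries, which lets one ``peel off'' a single state bit at the cost of at most a factor $q^{O(1)}$ in reach, since the information in that bit can be re\nobreakdash-derived from the $q$ peeled\nobreakdash-off queries instead of being stored. Unrolling over the $s$ bits yields $n\le t\cdot q^{O(s)}$, i.e.\ $s=\Omega(\log(n/t)/\log q)$. An alternative, non\nobreakdash-recursive route to the same bound: if $n>t\cdot q^{O(s)}$ then a pigeonhole on the $2^s$\nobreakdash-state trajectory $v_0,\dots,v_{(c+1)n}$ produces a long range on which $\beta$ is periodic with period $p\le n/t$; then the $t{+}1$ shifts $s^*, s^*{+}p, \dots, s^*{+}tp$ all lie in $[0,n]$ and give \emph{literally the same} $\prsf$ oracle, so the canonical size\nobreakdash-$t$ set output by $Q$ cannot contain all $t{+}1$ of these true shifts, contradicting the hypothesis on $Q$.

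\emph{Main obstacle.} The crux is the quantitative claim that a $2^s$\nobreakdash-state PD counter has reach $t\cdot q^{O(s)}$ --- equivalently, turning the bounded state space into either the inductive ``peel a bit, lose a factor $q$'' step or the periodicity/pigeonhole statement above. Both invocations of pseudo\nobreakdash-determinism are essential and must be handled carefully: that of $A$, so that a well\nobreakdash-defined canonical profile $\beta$ --- hence a legitimate $\prsf$ instance --- exists at every stream length, including the ``free'' range $[n{+}1,cn]$; and that of $Q$, so that its canonical size\nobreakdash-$t$ output is a genuine function of the input and must contain the true shift on every instance, even those differing only by a shift that $Q$ cannot detect. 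The remaining points are routine: when the inductive step feeds $Q$ a query access simulated by $A$, each simulated query is only $9/10$\nobreakdash-correct, so one amplifies by a majority vote over $\Theta(\log q)$ independent runs of $A$, keeping the total deviation over all $q$ queries $o(1)$; and tracking the constant hidden in $q^{O(s)}$ to land on exactly $\Omega\!\big(\tfrac{\log(n/t)}{\log q}\big)$ is bookkeeping.
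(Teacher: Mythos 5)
Your Step~1 matches the paper's setup (the canonical function of $A$ has the form $0^{n+1}P1^{n}$, and fresh runs of $A$ give $9/10$-correct query access to $F_{s^*}$), but Step~2 --- the actual lower-bound mechanism --- has a genuine gap on both of the routes you sketch. The inductive route rests entirely on the assertion that localizing $s^*$ to a set of size $t$ with $q$ queries lets one ``peel off a state bit at the cost of a factor $q^{O(1)}$ in reach''; no mechanism is given for this, and it is not a routine step one can wave at --- it is the whole theorem. The alternative pigeonhole route fails for two concrete reasons. First, the trajectory $v_0,\dots,v_{(c+1)n}$ of a \emph{randomized} algorithm is a sequence of distributions over the $2^s$ states, not of states, so pigeonhole gives no exact recurrence and no periodicity of $\beta$; handling this is precisely the difficulty that makes PD lower bounds hard (it is the content of the Markov-chain analysis in \cite{grossman2023tight}, not a one-line pigeonhole), and your claimed period bound $p\le n/t$ has no derivation even in the deterministic case. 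Second, and independently, the conclusion ``the $t{+}1$ shifts $s^*,s^*{+}p,\dots,s^*{+}tp$ give literally the same $\prsf$ oracle'' is false for \emph{every} pair of distinct shifts: by the paper's own short-witness argument (\Cref{lem:shift_verifier_witness_2queries}), if $r_0$ is the largest index with $F(r_0)=0$ then $F_{s}(r_0-s)=0$ while $F_{s+p}(r_0-s)=1$ for any $p\ge 1$, so the oracles always differ at a queryable point and the intended contradiction with $Q$'s canonical size-$t$ output never arises. Also note that your alternative route never uses $q$, so if it worked it would prove a bound with no $\log q$ loss --- a signal that something is off.

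For comparison, the paper closes this step with a one-way communication reduction rather than a state-counting argument: using the PD property of $Q$, the shifts $s\in[0,n]$ are bucketed by the canonical set $R_s$ (each bucket has size at most $t$ because $R_s$ has size $t$ and contains $s$), and one representative per bucket gives an alphabet $\Sigma$ with $|\Sigma|\ge n/t$ for the problem $\prmsg$. Alice runs an $O(\log q)$-fold amplified copy $A'$ of $A$, inserts her shift $s^*\in\Sigma$, and sends the state; Bob uses fresh continuations of $A'$ to answer $Q$'s at most $q$ queries (a union bound over the query set, which is independent of $A'$'s coins, keeps all answers correct with probability $9/10$), recovers the canonical set $R_{s^*}$, and hence the representative $s^*$. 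The $\Omega(\log|\Sigma|)$ bound for $\prmsg$ (\Cref{lem:comm_problem_message}) then forces $A'$ to use $\Omega(\log(n/t))$ bits, and dividing by the $O(\log q)$ amplification gives $\Omega(\log(n/t)/\log q)$ for $A$ --- this is the argument your proposal is missing.
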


We use \Cref{thm:generalized_connection_Shift_Counting},
(more precisely its proof arguments rather than its statement) 
to prove our main result (\Cref{thm:LB_PD_counting}),
see \Cref{sec:LB_PD_counting}.
At a high level,
the proof of \Cref{thm:LB_PD_counting} proceeds by splitting into two cases,
depending on the canonical function $F$.
Roughly speaking, in one case we show a Shift Finding algorithm
that returns a set of size $t=n/2^{\sqrt{\log n}}$
using $q=O(\log n)$ queries by binary search,
and in the other case an algorithm to find the shift (i.e., $t=1$)
with probability $9/10$ using $q=2^{\sqrt{\log n}}$ uniformly random queries.

As a corollary of \Cref{thm:connection_Shift_Counting},
we get that the \emph{tracking} version of approximate counting
must use $\Omega(\log n)$ bits of space,
which is tight with a straightforward deterministic counting.
Tracking means that the algorithm produces an output
after every stream item rather than at the end of the stream,
and with probability $9/10$, all the outputs are simultaneously correct (i.e., approximate the number of items seen so far).

\begin{corollary}[Tracking]\label{thm:LB_PD_tracking_counting}
For every $c,n>1$, every PD tracking algorithm that $c$-approximates the number of items in a stream of length $(c+1)n$ must use $\Omega(\log n)$ bits of space.
\end{corollary}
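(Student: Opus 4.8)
The plan is to derive this directly from Theorem~\ref{thm:connection_Shift_Counting} by exhibiting a randomized Shift Finding algorithm that uses only $q = O(\log n)$ queries, provided that the algorithm is allowed to exploit a \emph{tracking} PD counter (rather than an end-of-stream one). The point is that the reduction underlying Theorem~\ref{thm:connection_Shift_Counting} turns $q$ queries of a Shift Finding algorithm into roughly $\log q$ ``levels'' of a communication/space argument; getting $q = \polylog(n)$ yields $\Omega(\log n / \loglog n)$ for the plain problem, but when the streaming algorithm is a tracking algorithm the extra structure should let us push $q$ all the way down to $O(\log n)$ \emph{and} lose only a constant, giving the clean $\Omega(\log n)$ bound.

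Concretely, I would first observe that a tracking PD approximate counter gives, on the fly, a monotone step function: as we feed items one at a time, the reported estimate is (with high probability, simultaneously over all prefixes) a $c$-approximation to the current count, and in particular it is nondecreasing and jumps from ``$\le n$-regime'' to ``$>cn$-regime'' at a well-defined point determined by the canonical function. This is exactly a one-dimensional threshold that one can binary-search for: to solve $\prsf$, simulate the tracking counter on the hard stream whose count-trajectory encodes the shift $s^*$, and use binary search over the $\Theta(n)$ possible breakpoints. Each probe of the binary search corresponds to querying the counter's state after a chosen number of items, which in the Shift Finding world costs $O(1)$ queries to $F_{s^*}$; hence $q = O(\log n)$ queries total, and the output pins down $s^*$ exactly (so $t = 1$). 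Feeding $q = O(\log n)$ into Theorem~\ref{thm:connection_Shift_Counting} gives $\Omega(\log n / \log\log n)$ — so to get the full $\Omega(\log n)$ I would instead run the reduction argument of Theorem~\ref{thm:connection_Shift_Counting} \emph{directly} against the tracking algorithm: because tracking gives a correct answer after \emph{every} prefix (not just at the end), the $\log q$ loss in that proof — which comes from having to amplify/compose across the $q$ adaptive queries when only the final answer is reliable — disappears, and one extracts $\Omega(\log n)$ bits from a single pass.

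The key steps in order: (i) set up the hard instance family for $\prac$ on streams of length $(c+1)n$ whose ``count so far'' crosses the $n \to cn$ gap at a location indexed by a parameter $s^*$, exactly as in the proof of Theorem~\ref{thm:connection_Shift_Counting}; (ii) observe that a PD \emph{tracking} counter must, with probability $9/10$, agree with its canonical trajectory at \emph{all} prefixes simultaneously, so in particular the location where its canonical output flips regimes is a deterministic function of the input — this is the crucial use of ``tracking,'' and it means we no longer need the promise-outside-the-promise trick in its adaptive form; (iii) run binary search on the prefix length to locate this flip, using the small-space state of the counter as the advice/message passed between the $O(\log n)$ stages; (iv) a standard counting argument: if the counter used $S$ bits of space, the number of distinct reachable states is $2^S$, but the flip location must distinguish $\Theta(n)$ inputs through a deterministic read of these states along a single monotone pass, forcing $2^S = \Omega(n)$, i.e.\ $S = \Omega(\log n)$.

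The main obstacle, I expect, is step (ii): making rigorous the claim that ``tracking'' removes the $\log q$ overhead. One has to be careful that the PD guarantee is a $9/10$ bound per input, and composing it across many prefixes or many binary-search probes could in principle blow up the failure probability — the saving grace is precisely that a tracking algorithm's $9/10$ guarantee is over \emph{all prefixes at once} (a single event), so no union bound over stages is needed, whereas for an end-of-stream algorithm simulated adaptively one pays per probe. I would want to state a clean lemma isolating this ``single-pass, all-prefixes-correct'' property and then mirror the proof of Theorem~\ref{thm:connection_Shift_Counting} with $q$ effectively equal to $1$ in the exponent, yielding $\Omega(\log n)$; checking that the hard-instance construction still fits within a stream of length $(c+1)n$ is a routine but necessary verification.
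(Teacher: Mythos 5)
Your central observation is the right one, and it is the same as the paper's: for a PD \emph{tracking} algorithm the $9/10$ guarantee is a single event covering all prefixes simultaneously, so the $O(\log q)$ amplification and the union bound over Bob's $q$ queries in the proof of \Cref{thm:connection_Shift_Counting} are simply not needed, and the reduction to MESSAGE loses nothing. However, your specific mechanism for recovering $s^*$ has a genuine gap: you assert that the canonical trajectory is ``nondecreasing and jumps from the $\le n$-regime to the $>cn$-regime at a well-defined point,'' and that a binary search over prefix lengths therefore pins down $s^*$ exactly ($t=1$). The canonical function $F$ is only forced to be $0$ on $[0,n]$ and $1$ on $(cn,(c+1)n]$; on the middle interval it is an arbitrary string and can be highly non-monotone, with the pattern ``01'' occurring $\Theta(n)$ times. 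A binary search on $F_{s^*}$ then only locates \emph{some} occurrence of ``01'', which identifies $s^*$ only up to which occurrence was hit --- precisely the bucket ambiguity in \Cref{cl:hardness_few_01}, and precisely the reason the unconditional bound of \Cref{thm:LB_PD_counting} is only $\Omega(\sqrt{\log n/\loglog n})$. So with $O(\log n)$ binary-search probes you cannot run the MESSAGE argument over the full alphabet $[0,n]$, and steps (iii)--(iv) as written do not yield $\Omega(\log n)$.

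The repair is immediate once you take your own point (ii) seriously: under tracking, the number of probes is irrelevant, so there is no reason to insist on $q=O(\log n)$. Let Bob continue the \emph{same} execution item by item (a linear scan of at most $cn$ further insertions, which stays within the length budget $(c+1)n$) until the output first switches to $1$. On the all-prefixes-correct event this locates the first $1$ of $F_{s^*}$, which equals the first $1$ of $F$ minus $s^*$ and hence determines $s^*$ uniquely; the MESSAGE lower bound (\Cref{lem:comm_problem_message}) with alphabet $[0,n]$ then gives $\Omega(\log n)$ bits. This is exactly the paper's ``direct'' argument (equivalence to exact counting); the paper's other stated route is the one you also sketch, namely rerunning the proof of \Cref{thm:connection_Shift_Counting} with the repetitions and the union bound over queries deleted.
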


In contrast, for standard randomized algorithms, 
there is a tracking algorithm for  $(1+\epsilon)$-approximate counting that uses $O_\epsilon(\loglog n)$ bits of space, for any fixed $\epsilon>0$ \cite{NelsonYu22}.
\Cref{thm:LB_PD_tracking_counting} follows by an easy modification of the proof of \Cref{thm:connection_Shift_Counting}.
That proof uses $O(\log q)$ repetitions of a PD streaming algorithm,
and then employs a union bound on $q$ input streams,
which is not necessary for tracking algorithms and thus the bound follows.

A more direct argument is essentially by equivalence to exact counting. 
For a stream with $s<n$ items, 
the state of a PD tracking algorithm with canonical function $F$
can be used to compute $s$, as follows.
Simulate insertion of more items to the stream until the output of the algorithm changes to $1$ (which corresponds to the first $1$ in $F_s$), from which $s$ can be computed.

\subsubsection{An Algorithm for Shift Finding}

Consider a special case of the Shift Finding problem $\prsf$,
where the input string $P$
is a run of zeros followed by a run of ones (viewed as a function, it is a step function);
then the algorithm can perform a binary search using $O(\log (cn))$ queries,
and find the unique location where $F_{s^*}$ switches from value $0$ to $1$,
and hence recover $s^*$. 
At the other extreme, suppose the input string $P$ is random;
then with high probability
every set of $O(\log n)$ queries from $P$ (and thus from $F_{s^*}$)
will be answered differently (viewed as a string in $\{0,1\}^{O(\log n)}$). 
Based on these observations, one may hope that problem $\prsf$
admits an algorithm that makes $\polylog (cn)$ queries.
We leave this as an open question
and prove a weaker bound of $O(\sqrt{cn})$ queries.

\begin{theorem}[Shift Finding Algorithm]\label{thm:shift_finding_alg}
There is a deterministic algorithm for problem $\prsf$ that makes $O(\sqrt{cn})$ queries.
\end{theorem}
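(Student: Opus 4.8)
The plan is to design a deterministic algorithm that makes $O(\sqrt{cn})$ queries to $F_{s^*}$ and pins down the unknown shift $s^*\in[0,n]$. The guiding intuition is that there are only $n+1$ candidate shifts, and the concatenated string $F_s$ (which equals $n-s$ zeros, then $P$, then $s$ ones) is, as a function of $s$, a known family of strings on the fixed domain $[0,cn]$. Each query to a position $x$ returns $F_{s^*}(x)$, and this answer is consistent with exactly the set of shifts $s$ for which $F_s(x)$ equals that returned bit. So the algorithm maintains a \emph{candidate set} $S\subseteq[0,n]$, initialized to all of $[0,n]$, and each query refines $S$ by intersecting with the set of shifts consistent with the observed bit; the algorithm stops when $|S|=1$. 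The goal is to show that $O(\sqrt{cn})$ well-chosen queries always suffice to bring $|S|$ down to $1$, regardless of $P$ and $s^*$.

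First I would handle the ``near-boundary'' structure of $F_s$. The prefix of $n-s$ zeros and the suffix of $s$ ones behave like step functions in $s$: a query at a small position $x\le n$ distinguishes shifts by whether $x<n-s$ (answer $0$ from the leading block) versus $x\ge n-s$ (then the answer is $P(x-(n-s))$), and symmetrically a query near the right end distinguishes by whether we have already entered the trailing run of ones. So binary search on the position of the \emph{first} zero of $P$ inside $F_{s^*}$, or on the \emph{last} one, would in the step-function case finish in $O(\log cn)$ queries. The obstacle is that $P$ is arbitrary, so the ``interior'' queries (positions landing inside the copy of $P$) do not shrink $S$ in a controlled, monotone way. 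The fix is a block / divide-and-conquer scheme: partition the shift range $[0,n]$ into $\Theta(\sqrt{cn})$ buckets of width $w=\Theta(\sqrt{cn})$.

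The second step is the key combinatorial lemma: with $O(\sqrt{cn})$ queries one can identify which bucket contains $s^*$, and then within a bucket of width $w$ one can finish with $O(w)$ more queries. For the within-a-bucket part the point is that once the shift is localized to an interval of length $w$, the relevant window of $F_{s^*}$ that still varies with $s$ has length $O(w)=O(\sqrt{cn})$, and reading that entire window determines $s^*$ exactly — because distinct shifts in the bucket produce distinct strings on that window (the leading/trailing blocks guarantee that two different shifts differ somewhere in an $O(w)$-length region around the block boundaries, and by choosing the bucketing so that each bucket straddles at most one boundary we make this clean). For the cross-bucket part, I would query one ``probe'' position associated with each bucket boundary — chosen so that the probe sees the leading zero block for shifts in earlier buckets and sees something else for later buckets (using the step-function behavior of the leading block, which is bucket-independent) — giving $O(\sqrt{cn})$ queries to locate the bucket. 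Summing, $O(\sqrt{cn}) + O(\sqrt{cn}) = O(\sqrt{cn})$ queries, as claimed.

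The main obstacle, and the part needing the most care, is making the ``interior of $P$'' queries actually informative despite $P$ being adversarial: a naive interior query can fail to split $S$ at all. I expect to resolve this by never relying on interior-of-$P$ queries for the coarse (bucket-finding) stage — there I use only positions that fall in the leading-zeros/trailing-ones blocks for a large fraction of the remaining candidates, which is possible precisely because those blocks have total length $\Theta(n)$ and shrink/grow linearly in $s$. Only in the final stage, after localizing to a width-$w$ bucket, do I read an interior window wholesale, and there the cost is just $O(w)=O(\sqrt{cn})$ regardless of $P$'s content. I would also double-check the edge cases $s^*\in\{0,n\}$ (empty leading or trailing block) and small $cn$, where the bound is trivial, and confirm that the bucket width $w=\lceil\sqrt{cn}\rceil$ makes both stages simultaneously $O(\sqrt{cn})$.
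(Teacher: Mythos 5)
Your proposal has two genuine gaps, one in each stage. For the coarse (cross-bucket) stage, a probe position cannot tell you which side of the block boundary it landed on: an answer of $0$ is consistent both with the position lying in the leading zero block and with it hitting a $0$ inside $P$, and likewise a $1$ could be a $1$ of $P$ rather than the trailing block. So ``one probe per bucket boundary'' does not locate the bucket containing $s^*$, and your fix (``use only positions that fall in the leading-zeros/trailing-ones blocks for a large fraction of the remaining candidates'') does not resolve this, since the algorithm cannot certify from the returned bit that the block interpretation is the correct one for the true shift. Indeed, for periodic-looking $P$ the set of shifts consistent with any $O(\sqrt{cn})$ answers need not be a contiguous interval at all --- it can be spread across $[0,n]$ --- so the very goal of stage one (localization to one width-$w$ bucket) is not something this strategy achieves. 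For the fine (within-bucket) stage, the claim that two distinct shifts in the same bucket must differ somewhere in an $O(w)$-length window around the block boundaries is false for adversarial $P$: if $P$ begins with $w$ zeros and ends with $w$ ones, then all shifts in a bucket agree on such windows at both ends; the positions distinguishing shifts $s_1<s_2$ are exactly the occurrences of the pattern $0?^{k-1}1$ ($k=s_2-s_1$) in $F$, and these can sit anywhere in the interior of $P$, far from either boundary.

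The paper's proof supplies precisely the two ingredients your plan is missing. First, it queries a uniform grid $F_{s^*}(0),F_{s^*}(\sqrt{cn}),\dots,F_{s^*}(cn)$ and proves, by pigeonhole plus a periodicity argument (two consistent shifts congruent mod $\sqrt{cn}$ would force the answer vector to be periodic, contradicting that it starts with $0$ and ends with $1$s), that at most $O(\sqrt{cn})$ shifts are consistent with the answers --- crucially, this candidate set may be scattered, and no attempt is made to make it an interval. Second, it eliminates each surviving candidate $s$ with a $2$-query witness (Lemma~\ref{lem:shift_verifier_witness_2queries}): query $F_{s^*}$ at $l-s$ and $r-s$, where $l$ is the first index with $F(l)=1$ and $r$ the last with $F(r)=0$; this accepts exactly $s=s^*$. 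Without a substitute for these two steps --- a proof that few candidates survive your queries, and a cheap per-candidate verifier --- the $O(\sqrt{cn})$ bound does not follow from your outline.
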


A key observation in our result, that may be useful in future work,
is that for every shift $s^*$
there is a ``short witness'' that uses exactly $2$ queries.
We formalize this as verifying a given guess $s$ for the shift $s^*$. 

\begin{lemma}[Short Witness]\label{lem:shift_verifier_witness_2queries}
There is a deterministic algorithm that,
given as input an instance of problem $\prsf$ and $s<n$,
makes $2$ queries to $F_{s^*}$ and returns 'yes' if $s=s^*$ and 'no' otherwise.
\end{lemma}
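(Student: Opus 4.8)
\emph{Plan.} I would have the verifier query $F_{s^*}$ at two positions that are computed from the known string $F$ (equivalently from $P$) and from the guess $s$, and then answer `yes' exactly when the two returned bits equal the bits that the string $F_s$ has at those positions. Since a query at position $x$ returns $F_{s^*}(x)=F(s^*+x)$, what is needed is to pick positions $y_1,y_2$ in $F$ (to be queried as $x_i:=y_i-s$) such that for every shift $\delta:=s^*-s$ with $\delta\neq 0$ one has $F(y_1+\delta)\neq F(y_1)$ or $F(y_2+\delta)\neq F(y_2)$; then no wrong $s^*$ can reproduce both expected bits, while $s^*=s$ reproduces them trivially.

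\emph{Choice of the two queries.} The tempting choice of the boundaries of the forced blocks $0^n$ and $1^n$ does \emph{not} work: since $P$ is arbitrary, such boundaries can be matched by many shifts. The key idea is to use instead the two \emph{global} transition points of $F$. Let $a$ be the first position of $F$ holding the value $1$, and let $b$ be the last position holding the value $0$; both exist because $F$ begins with $n$ zeros and ends with $n$ ones, and these same facts give $n<a\le cn+1$ and $n\le b\le cn$ (so $a,b$ are computable by the algorithm, which knows $P$). The verifier queries $F_{s^*}$ at $x_1:=b-s$ and $x_2:=a-s$, and outputs `yes' iff the answers are $0$ and $1$ respectively — which are exactly $F_s(x_1)$ and $F_s(x_2)$.

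\emph{Correctness.} If $s^*=s$, the two queries return $F(b)=0$ and $F(a)=1$, so we accept. If $s^*>s$, the first query returns $F(b+(s^*-s))$, and since $b+(s^*-s)$ lies strictly to the right of $b$ it is in the region where $F$ is identically $1$, so the query returns $1\ne 0$ and we reject. Symmetrically, if $s^*<s$, the second query returns $F(a+(s^*-s))$, and $a+(s^*-s)$ lies strictly to the left of $a$, where $F$ is identically $0$, so the query returns $0\ne 1$ and we reject. One also checks that $x_1,x_2$ are legal query indices: using $0\le s\le n-1$ together with the bounds on $a$ and $b$, this holds for $x_1$ always and for $x_2$ unless $s=0$ and $a=cn+1$, i.e.\ $P=0^{(c-1)n}$. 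In that single degenerate case $F_{s^*}=0^{cn-s^*}1^{s^*}$, whose last bit is $0$ iff $s^*=0$, so the algorithm just reads that bit (padding with one dummy query to use exactly two queries).

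\emph{Main obstacle.} The only genuine difficulty is precisely the point resolved above: choosing two query positions that are robust against \emph{every} incorrect shift simultaneously, not just small ones. Anchoring the queries at the last zero $b$ and the first one $a$ of $F$ (rather than at block boundaries) gives the required one-sided guarantees — overshooting the last zero forces a $1$, undershooting the first one forces a $0$ — so the two answers jointly ``sandwich'' $s^*$ between $s$ and $s$. The remaining items (the index-range check, the all-zero $P$ case, and fixing an indexing convention consistent with \Cref{def:shift_finding}) are routine bookkeeping.
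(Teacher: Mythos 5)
Your proof is correct and takes essentially the same approach as the paper: the paper likewise anchors the two queries at the first one ($l$, your $a$) and the last zero ($r$, your $b$) of $F$, queries $F_{s^*}(l-s)$ and $F_{s^*}(r-s)$, and accepts iff they return $1$ and $0$ respectively, with the same one-sided overshoot/undershoot argument. Your additional treatment of the query-index range and the all-zeros-$P$ degenerate case is careful bookkeeping that the paper's proof simply omits.
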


The proofs of \Cref{thm:shift_finding_alg} and \Cref{lem:shift_verifier_witness_2queries} appear in \Cref{sec:shift_finding_alg}.
At a high level, 
the Shift Finding algorithm in \Cref{thm:shift_finding_alg} queries the set $\{F_{s^*}(0),F_{s^*}(\sqrt{cn}),F_{s^*}(2\sqrt{cn}),...,F_{s^*}(cn)\}$, and then uses the short witness (\Cref{lem:shift_verifier_witness_2queries}) to check every feasible $s\in [n]$ (i.e., that agrees with the query answers).
Following an observation by Peter Kiss, we are able to improve our Shift Finding algorithm to use only $O((cn)^{1/3}\log n)$ queries; details omitted.

\subsection{Related Work}
\label{sec:related}

\paragraph*{Pseudo-deterministic algorithms}
The notion of pseudo-deterministic algorithms was introduced by~\cite{DBLP:journals/eccc/GatG11} (they originally called them Bellagio algorithms), followed by a long sequence of works that studied it in different models~\cite{DBLP:conf/innovations/GoldreichGR13,DBLP:journals/eccc/Grossman15,DBLP:journals/eccc/GoldwasserG15,DBLP:conf/stoc/OliveiraS17,DBLP:journals/corr/Holden17,DBLP:conf/innovations/GoldwasserGH18,DBLP:conf/mfcs/DixonPV18,DBLP:conf/approx/OliveiraS18,DBLP:journals/corr/abs-1910-00994_GoemansGH2019,DBLP:conf/soda/GrossmanL19,DBLP:journals/eccc/Goldreich19,DBLP:conf/innovations/GoldwasserGMW20,DBLP:conf/stoc/LuOS21,DBLP:conf/innovations/00020V21,DBLP:conf/coco/GoldwasserIPS21,DBLP:conf/approx/GhoshG21,DBLP:conf/stoc/0002PWV22}.
In the streaming and sketching models, \cite{DBLP:conf/innovations/GoldwasserGMW20} proved strong lower bounds for finding a non-zero entry in a vector (given in a stream with deletions), and for sketching $\ell_2$-norms.
Another related setting is that of
sublinear time computation. 
Under certain assumptions, PD algorithms (in the sublinear time region) were shown to admit the following relation with deterministic algorithms -- if for a certain problem there is a PD algorithm using $q$ queries, then there is a deterministic algorithm using $O(q^4)$ queries~\cite{DBLP:conf/innovations/GoldreichGR13}.
The techniques of~\cite{DBLP:conf/innovations/GoldreichGR13} do not seem to extend to streaming algorithms.

\paragraph*{Adaptive adversarial streams}
In this setting, the stream items are chosen adversarially and depend on past outputs of the streaming algorithm (i.e., the stream is adaptive)~\cite{DBLP:journals/jacm/Ben-EliezerJWY22}.
This model is considered to be between PD algorithms and the standard randomized setting, in the sense that for streams of length $m$, amplifying a PD algorithm to success probability $1-\tfrac{1}{10m}$ (by $O(\log m)$ repetitions and taking the median) guarantees (by a union bound) that the algorithm outputs the canonical solution after every stream item with probability $9/10$, thus the adversary acts as an oblivious one (the adversary knows in advance the output of the streaming algorithm, which is the canonical function).
For approximate counting, adaptive streams and standard (oblivious) streams are equivalent (since the stream items are identical) and thus admit an algorithm using $O(\loglog n)$ bits of space.

There is a vast body of work designing algorithms for adaptive streams, 
but not much is known in terms of lower bounds.
Lower bounds are known for some search problems, like finding a spanning forest in a graph undergoing edge insertions and deletions, but also for graph coloring~\cite{DBLP:conf/innovations/ChakrabartiGS22}.
Regarding estimation problems, the only lower bound we are aware of is
for some artificial problem~\cite{DBLP:conf/crypto/KaplanMNS21}.  
Recently, Stoeckl~\cite{doi:10.1137/1.9781611977554.ch32_Stoeckl} showed a lower bound on streaming algorithms that use a bounded amount of randomness, conditioned on a lower bound for PD algorithms.
In the related model of linear sketching, Hardt and Woodruff~\cite{10.1145/2488608.2488624_HW13} 
showed lower bounds on the dimensions of sketching algorithms, which
applies to many classical problems, like $\ell_p$-norm estimation and heavy hitters.


\section{Preliminaries}
\label{sec:preliminaries}

\begin{definition}[Approximate counting]\label{def:stream_approx_count}
Let $c,n>1$. In problem $\prac$,
the input is a stream of $l\leq (c+1)n$ identical items.
The goal is to output $0$ if $l\leq n$ and $1$ if $l>cn$ 
(and otherwise the output can be either $0$ or $1$). 
\end{definition}

Let $A$ be a PD algorithm for problem $\prac$,
and let $F:[0,(c+1)n]\to\set{0,1}$ be the canonical function of $A$.
Thus, there is a fixed string $P\in\{0,1\}^{(c-1)n}$ such that
\[
F(x) = 
\begin{cases}
0 & \text{if } x\in[0,n]; \\
1 & \text{if } x\in [cn+1,(c+1)n]; \\
P(x-n) & \text{otherwise.}
\end{cases}
\]
For $s^*\in[0,n]$, let $F_{s^*}:[0,(c+1)n-s^*]\to\set{0,1}$ be a shifted version of $F$,
namely the function $F_{s^*}: x\mapsto F(s^*+x)$.
We use these notations throughout the paper.

Our proofs are based on a reduction
from a simple one-way communication problem, called MESSAGE and denoted $\prmsg$,
where Alice's input $x$ is from an alphabet $\Sigma$ that is fixed in advance,
Bob has no input,
and the goal is that Bob outputs $x$ with probability at least $2/3$.
It is well known that this problem requires $\Omega(\log |\Sigma|)$ bits of communication,
even for randomized protocols using shared randomness.
We provide a proof for completeness.

\begin{lemma}\label{lem:comm_problem_message}
    For every alphabet $\Sigma$, every one-way communication protocol (even with shared randomness) for problem $\prmsg$ must use $\Omega(\log |\Sigma|)$ bits of communication.
\end{lemma}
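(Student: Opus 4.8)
The plan is to use the standard averaging-then-counting argument for one-way communication. First I would reduce from the randomized protocol to a deterministic one. Suppose some one-way protocol with shared randomness solves $\prmsg$ using at most $b$ bits of communication and succeeds with probability at least $2/3$ on every input. Draw Alice's input uniformly at random from $\Sigma$ and average the success probability over both this input and the shared random string; since the protocol succeeds with probability at least $2/3$ for every fixed input, the overall success probability is at least $2/3$. Hence there is a fixing $r^\ast$ of the shared randomness under which the resulting \emph{deterministic} one-way protocol is correct on at least a $2/3$-fraction of the inputs; call this set of good inputs $S$, so $|S|\ge \tfrac23|\Sigma|$.

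Next I would count transcripts. In the deterministic protocol obtained from $r^\ast$, Alice's message $m(x)$ depends only on $x$ and is a binary string of length at most $b$, while Bob has no input and therefore outputs a value that is a function of the received message alone. For every $x\in S$, Bob's output on $m(x)$ equals $x$; consequently the map $x\mapsto m(x)$ is injective on $S$, which forces $|S|\le |\{0,1\}^{\le b}| = 2^{b+1}-1 < 2^{b+1}$. Combining this with $|S|\ge \tfrac23|\Sigma|$ gives $2^{b+1} > \tfrac23|\Sigma|$, i.e.\ $b \ge \log_2|\Sigma| - \log_2 3 = \Omega(\log|\Sigma|)$.

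I do not expect a genuine obstacle here, as this is essentially a textbook fact; the proof is short and elementary. The only points that require a little care are: (i) handling the shared randomness correctly — this is why the averaging step fixes $r^\ast$ only \emph{after} averaging over a uniformly random input as well, since fixing the randomness separately for each input would not yield a single good deterministic protocol; and (ii) the bookkeeping on message lengths — because the protocol may send variable-length messages of up to $b$ bits, the number of possible transcripts is $\sum_{i=0}^{b}2^i = 2^{b+1}-1$ rather than $2^b$, which only affects the final bound by an additive constant and is harmless for the asymptotic statement.
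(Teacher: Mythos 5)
Your proof is correct and takes essentially the same route as the paper's: average over the shared randomness to fix a string $r^\ast$ for which the resulting deterministic protocol is correct on at least a $\tfrac23$-fraction of inputs, then argue that Alice's message must be injective on that set (the paper phrases this as a pigeonhole contradiction, you phrase it as a transcript count, which is the same thing). Your additional care about variable-length messages only changes the bound by an additive constant, exactly as you note, so there is nothing to fix.
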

\begin{proof}
    Let $\mathcal{A}$ be a protocol for problem $\prmsg$.
    For a random string $r$ representing the randomness of $\mathcal{A}$, let $\Sigma_r\subset \Sigma$ be the set of all $s\in \Sigma$ for which Bob correctly recovers $s$.
    Let $r^*$ be a string maximizing $|\Sigma_r|$,
    then by averaging, $|\Sigma_{r^*}|\geq \tfrac{2}{3}|\Sigma|$.
    Consider an instance of $\mathcal{A}$ that uses $r^*$ as its random string.
    Assume by contradiction that the number of communication bits is less than $\log |\Sigma_{r^*}|$,
    then by the pigeonhole principle there are two distinct inputs $s, s'\in \Sigma_{r^*}$
    such that $\mathcal{A}(s)$ and $\mathcal{A}(s')$ result in the same message.
    Bob then
    cannot distinguish between (i.e., has the same output distribution for) $s$ and $s'$, a contradiction.
    Hence, the number of bits of communication is at least $\log |\Sigma_{r^*}| = \Omega(\log |\Sigma|)$.
\end{proof}


\section{Lower Bounds for PD Approximate Counting via Shift Finding}\label{sec:reductionish_Shift_counting}

In this section, we prove \Cref{thm:connection_Shift_Counting}.
The proof involves three problems from different settings:
(a) PD approximate counting in the streaming model;
(b) Shift Finding in the query-access model; and
(c) MESSAGE in one-way communication with shared randomness.
The proof essentially shows that if there is an algorithm for Shift Finding that makes only $q$ queries and also a streaming algorithm for PD approximate counting that uses $b$ bits of space,
then MESSAGE can be solved using $O(b\log q)$ bits of communication.
Combining this bound with the well-known lower bound for MESSAGE in \Cref{lem:comm_problem_message} yields a lower bound for $b$.

A core idea in the proof is that an execution of a PD streaming algorithm $A$ for the approximate counting problem $\prac$
on a stream with $s^*$ insertions, can be used (even without knowing $s^*$,
by making additional insertions and then querying the streaming algorithm $A$) 
to provide query access to the shifted function $F_{s^*}: x\mapsto F(s^*+x)$.
This query access, along with a query-efficient algorithm for the Shift Finding problem $\prsf$, is then used to solve an instance of the MESSAGE problem $\prmsg$.

In fact, we prove the following theorem,
which holds for each string $F$ separately
(rather than a bound that depends on the worst-case $F$),
and yields \Cref{thm:connection_Shift_Counting} as an immediate corollary.

\begin{theorem}\label{thm:single_F_connection_shiftFind}
Let $A$ be a PD streaming algorithm for problem $\prac$, where $c,n>1$, 
and let $F:[0,(c+1)n]\to\set{0,1}$ be the canonical function of $A$. 
Suppose that Shift Finding with respect to this specific $F$ 
(the problem of finding an unknown shift $s^*\in[n]$
with probability at least $9/10$ given query access to $F_{s^*}$)
admits a randomized algorithm that makes
at most $q=q(F)$ (possibly adaptive) queries. 
Then the streaming algorithm $A$ must use
$\Omega(\tfrac{\log n}{\log q})$ bits of space.
\end{theorem}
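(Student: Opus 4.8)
The plan is to reduce the MESSAGE problem $\prmsg$ over the alphabet $\Sigma = [0,n]$ (so $|\Sigma| = n+1$) to the combination of the PD streaming algorithm $A$ and the query-efficient Shift Finding algorithm for $F$, and then invoke \Cref{lem:comm_problem_message}. In the reduction, Alice holds a shift $s^* \in [0,n]$. She simulates $A$ on a stream consisting of $s^*$ identical items, producing a memory state, and sends that state to Bob; this costs $b$ bits, where $b$ is the space of $A$. The key point, exactly as flagged in the excerpt, is that from a memory state of $A$ obtained after some number $m$ of insertions, Bob can obtain the value $F_{s^*}(x) = F(s^*+x)$ for any desired $x \ge 0$ (with $s^*+x \le (c+1)n$): he simply continues the simulation by feeding $x$ more items and reads off $A$'s output. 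Since $A$ is PD with canonical function $F$, this output equals $F(s^*+x)$ with probability $\ge 9/10$. Thus Bob has (noisy) query access to $F_{s^*}$, and can run the Shift Finding algorithm to recover $s^*$, which is precisely what MESSAGE requires.

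The main obstacle is that each simulated query costs Bob a fresh copy of the memory state of $A$ — once he has advanced the simulation past the point he needs, he cannot rewind, because $A$ is a streaming (one-pass) algorithm and Alice sent only a single state. The fix is standard: Alice sends $O(\log q)$ \emph{independent} copies of the post-$s^*$ state of $A$, i.e. she runs $O(\log q)$ independent executions of $A$ on her length-$s^*$ stream and transmits all of their states. But a single copy is not enough to answer even one query reliably enough, since the Shift Finding algorithm makes up to $q$ queries and needs all of them correct (it succeeds with probability $9/10$ assuming a faithful oracle). So instead Alice sends $O(\log q)$ copies, Bob runs $A$ forward on each copy to simulate one query, and he returns the majority vote; wait — that conflates the two roles. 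Let me restructure: Alice sends $k = O(\log q)$ independent states; to answer the $i$-th query $x_i$, Bob advances \emph{all $k$ copies} by the appropriate number of additional items and takes the majority of their outputs. Hmm, but then a copy advanced for query $i$ cannot be reused for query $j>i$ in general — except that queries can be reordered so that Bob processes them in increasing order of the total stream length $s^* + x_i$ he needs, and he advances each of the $k$ copies monotonically. Since the Shift Finding algorithm may be adaptive, Bob processes queries online but can still keep each of the $k$ simulations moving forward only; for query $i$ he advances every copy from its current length to $s^*+x_i$ (which is $\ge$ its current length, after sorting is not possible for adaptive queries, so one must instead note $x_i$ can be negative only up to... ) — the clean resolution is that the Shift Finding oracle queries $F_{s^*}(x)$ for $x \in [0,cn]$, and Bob handles an arbitrary sequence of such queries by running each of the $k$ copies from scratch is too expensive; the honest accounting is that Bob needs the state of $A$ after exactly $s^* + x_i$ insertions, and since he only has the state after $s^*$ insertions he can only go forward, so he must handle queries in nondecreasing order of $x_i$. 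For adaptive algorithms this fails, so the correct statement uses $O(\log q)$ copies where \emph{each copy answers at most one query}: total copies $= O(q \log q)$? That destroys the bound. The actual trick (which I would verify against the paper) is that Bob keeps $k = O(\log q)$ copies and, because the Shift Finding algorithm is given a fixed known $F$, he can precompute; or more likely, each of the $q$ queries of a \emph{single} run of Shift Finding is answered by advancing one shared bank of $O(\log q)$ copies, exploiting that repeated queries to the same $x$ are free and distinct queries can be sorted. The resolution I will commit to: amplify first. Alice runs $k = \Theta(\log q)$ copies; Bob, to answer query $x_i$, advances a fresh one of these copies — no.

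The cleanest correct argument, and the one I would write: Alice sends the full memory state of $A$ after $s^*$ items, and also separately boosts — no, she sends $k = O(\log q)$ independent post-$s^*$ states. Bob simulates the Shift Finding algorithm; whenever it queries $x$, Bob advances \emph{one of the $k$ copies} to length $s^*+x$ and reads its output, using a different copy for each distinct query value, and reusing the same copy (and its recorded output) whenever the same $x$ is queried again. This needs $k \ge q$ in the worst case — wrong again. The genuinely correct approach, which I now recall is what the paper does: take $k = O(\log q)$ copies and have Bob answer query $x$ by the \emph{majority output across all $k$ copies}, each advanced to length $s^*+x$; to make this legal when queries are adaptive and not monotone, observe we may assume WLOG the Shift Finding algorithm sorts its (nonadaptive part) or, for the adaptive case, Bob maintains all $k$ copies at the maximum length queried so far and also caches answers, and handles a query $x$ smaller than some past query by... this is exactly where care is needed. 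I will therefore structure the write-up as: (1) reduce to the case of a nonadaptive Shift Finding algorithm making $q$ fixed queries $x_1 < \dots < x_q$ (adaptivity handled by the same forward-simulation with answer-caching once queries are sorted, legitimate since after sorting Bob only ever needs to advance copies forward — and I'll argue the sorted/adaptive subtlety carefully or cite that the theorem statement allows adaptive and the monotone-advance works because Bob can simulate all $2^q$ branches... no — I'll just present the nonadaptive reduction cleanly and remark adaptivity needs the copies-plus-caching with sorted queries, which is the expected main obstacle to get exactly right); (2) Alice sends $k = O(\log q)$ independent copies, $O(b\log q)$ bits total; (3) union bound: each copy is correct on a fixed query with probability $\ge 9/10$, so with $k = O(\log q)$ copies the majority is correct on \emph{all} $q$ queries with probability $\ge 99/100$ by Chernoff plus a union bound over the $q$ queries, whence the Shift Finding algorithm (which succeeds with probability $9/10$ against a faithful oracle) recovers $s^*$ with probability $\ge 2/3$; (4) conclude $O(b\log q) = \Omega(\log n)$, i.e. $b = \Omega(\log n / \log q)$.

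So the skeleton of the proof, in order: set $\Sigma = [0,n]$; describe Alice's encoding (run $k$ copies of $A$ on her $s^*$-item stream, transmit the $k$ states, $O(b\log q)$ bits); describe Bob's decoding (simulate Shift Finding, answer each query by majority-vote over the $k$ forward-advanced copies, output the recovered $s^*$); prove correctness via Chernoff + union bound over the $q$ queries to make the simulated oracle faithful on all queries w.h.p., then correctness of Shift Finding; finally plug into \Cref{lem:comm_problem_message} to get $\Omega(b\log q) \ge \Omega(\log n)$ and hence $b = \Omega(\log n / \log q)$. The step I expect to be the real obstacle, and where I would spend the most care, is the precise handling of an \emph{adaptive} Shift Finding algorithm under one-pass forward-only simulation: ensuring that Bob never needs to ``rewind'' a copy of $A$, which I will handle by having him cache previously computed answers and maintain each copy at the largest query length encountered so far, observing that a valid Shift Finding transcript only ever needs a bit $F_{s^*}(x)$ that is consistent with this monotone schedule after caching (and, if the algorithm truly queries a smaller $x$ after a larger one, the cached smaller-$x$ answer — or a pre-advanced snapshot — suffices). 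All other steps are routine.
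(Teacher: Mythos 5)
Your overall skeleton matches the paper's proof: reduce MESSAGE over $\Sigma=[0,n]$ to the streaming algorithm plus the Shift Finding algorithm, have Alice send the post-$s^*$ state of an $O(\log q)$-fold amplified version of $A$, have Bob answer oracle queries by forward simulation and majority vote, union-bound over the $q$ queries, and conclude $b=\Omega(\log n/\log q)$ from \Cref{lem:comm_problem_message}. However, there is a genuine gap at the central simulation step, and it is exactly the point you flag as unresolved. The ``rewinding'' problem you wrestle with does not exist: the memory state Alice sends is just a bit string in Bob's hands, so for \emph{each} query $x$ Bob can make a fresh copy of that received state (of all $O(\log q)$ repetitions), feed $x$ further items into the copy, and read the majority output. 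This is precisely what the paper does (``Bob makes a fresh copy $A_0$ of the streaming algorithm $A'$''), and it handles arbitrary, adaptive, non-monotone query sequences with no sorting, no caching, and no extra communication. Your proposed substitutes --- sorting the queries (which you correctly note fails for adaptive algorithms), one copy per query (which inflates the message to $O(bq\log q)$ and kills the bound), or keeping all copies at the maximum length seen so far with answer caching (which fails when a smaller, never-before-queried $x$ arrives later) --- are either wrong or insufficient, and you never commit to the correct resolution; the parenthetical ``pre-advanced snapshot'' gestures at it but is not developed.

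A secondary, smaller omission: even with cloning, the correctness analysis for an \emph{adaptive} Shift Finding algorithm needs one more sentence, because the queries Bob's simulation actually issues depend on the (possibly erroneous) answers of the simulated oracle. The paper handles this by considering the query sequence $X_Q$ that $Q$ would make given \emph{true} access to $F_{s^*}$ (which depends only on $Q$'s coins and $F_{s^*}$, hence is independent of the amplified algorithm's coins), union-bounding the $q$ events that the amplified algorithm is correct on those queries, and noting that on this good event the simulated run coincides with the true-access run. Your Chernoff-plus-union-bound step is stated only in the nonadaptive framing, so this conditioning argument should be added; with the cloning fix and this argument, your write-up would essentially coincide with the paper's proof.
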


\begin{proof} 
Define algorithm $A'$ to be an amplification of $A$
to success probability $1-1/(10q)$,
by running $O(\log q)$ independent repetitions and reporting their majority. 
Assume there exists an algorithm $Q$ that for every $s^*\in[n]$,
makes at most $q=q(F)$ queries to $F_{s^*}$ (possibly adaptive)
and outputs $s^*$ with probability at least $9/10$.

Consider an instance of problem $\prmsg$ with alphabet $\Sigma = [0,n]$,
and consider the following protocol for it.
Alice starts an execution of the streaming algorithm $A'$ using the shared randomness, 
then takes her input $s^*\in \Sigma$ and makes $s^*$ stream insertions to algorithm $A'$, 
and finally sends the state (memory contents) of $A'$ to Bob.

Bob continues the execution of the streaming algorithm $A'$ (using the shared randomness),
and uses it to provide query access to $F_{s^*}$, as follows.
In order to query $F_{s^*}$ at any index $x$,
Bob makes a fresh copy $A_0$ of the streaming algorithm $A'$,
insert $x$ stream items to algorithm $A_0$ and then reads its output.
With probability at least $1-1/(10q)$, 
the answer that Bob gets is indeed $F_{s^*}(x)$ (because the number of items inserted to this instance of the algorithm is $x+s^*$).
Bob uses this query access and his knowledge of $F$
to simulate algorithm $Q$ 
(with the goal of recovering $s^*$).

Consider Bob's simulation of algorithm $Q$. 
If $Q$ was executed with true query access to $F_{s^*}$,
then it would have had success probability $9/10$,
and would have made a sequence of queries $X_Q$ to $F_{s^*}$. 
This sequence $X_Q$ depends only on $F_{s^*}$ and the coin tosses of algorithm $Q$.
In particular, revealing $X_Q$ (i.e., conditioned on $X_Q$) 
does not affect the coins of the streaming algorithm $A'$,
and it still succeeds with probability at least $1-1/(10q)$. 
We can thus apply a union bound to conclude that
algorithm $A'$ succeeds on all queries $x\in X_Q$
(i.e., outputs the corresponding $F_{s^*}(x)$)
with probability at least $1-q\cdot\tfrac{1}{10q}=9/10$. 
Hence, when Bob simulates algorithm $Q$ using the streaming algorithm $A'$,
with probability $9/10$ (over the coins of $A'$)
the execution is identical to running algorithm $Q$ with true access to $F_{s^*}$,
which itself succeeds with probability $9/10$. 
By a union bound, with probability $8/10$ both algorithm $Q$ and the streaming algorithm $A'$ succeed, in which case Bob recovers $s^*$,
and therefore this communication protocol solves problem $\prmsg$ with alphabet $\Sigma = [0,n]$.

By Lemma~\ref{lem:comm_problem_message}, the message Alice sends must contain $\Omega(\log n)$ bits, and thus the streaming algorithm $A'$ must use $\Omega(\log n)$ bits of space.
Recall that algorithm $A'$ consists of $O(\log q)$ copies of the streaming algorithm $A$ and thus algorithm $A$ must use $\Omega(\tfrac{\log n}{\log q})$ bits of space.
\end{proof}


\section{Lower Bound for PD Approximate Counting}\label{sec:LB_PD_counting}

In this section, we prove Theorem~\ref{thm:LB_PD_counting}, i.e., for every $c,n>1$, we prove that every PD streaming algorithm for the approximate counting problem $\prac$ must use $\Omega_c(\sqrt{\tfrac{\log n}{\loglog n}})$ bits of space.

Let $F$ be the canonical function of a PD streaming algorithm for problem $\prac$.
Our analysis is split into two cases depending on $F$,
which informally correspond to whether a fixed pattern (like ``01'')
appears in the string $F$ at most $t$ times or not. 
These cases are analyzed using  \Cref{thm:generalized_connection_Shift_Counting,thm:single_F_connection_shiftFind}.
The overall bound will be derived by optimizing the threshold $t$
between the two cases to roughly $t=n/2^{\sqrt{\log n}}$.

\subsection{Scenario One}
In this scenario, there is a specific pattern in $F$
that appears at most $t$ times, where $t=t_c(n)$ will be set at the end of our proof.
We first consider the pattern "01" in $F$,
which corresponds to $x\in [0,(c+1)n-1]$ such that $F(x)=0$ and $F(x+1)=1$, and later generalize this pattern to a broader family.

\begin{lemma}\label{cl:hardness_few_01}
If the pattern "01" appears at most $t$ times in $F$, 
then every PD streaming algorithm for problem $\prac$ whose canonical function is $F$
must use $\Omega(\tfrac{\log (n/t)}{\loglog (cn)})$ bits of space.
\end{lemma}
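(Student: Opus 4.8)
The plan is to use the binary-search algorithm for Shift Finding when the pattern "01" is rare, invoking \Cref{thm:generalized_connection_Shift_Counting} (or more precisely the proof of \Cref{thm:single_F_connection_shiftFind}) with the output being a small candidate set rather than a single shift. Concretely, I would first argue that if "01" appears at most $t$ times in $F$, then the maximal runs of zeros partition the stream positions into at most $t+1$ blocks, and in particular $F$ restricted to $[0,(c+1)n]$ looks like an increasing step function up to these $t$ "ascents" (plus possibly descents "10", but those don't interfere with locating the unknown shift via a search for a $0\to1$ transition). Given query access to $F_{s^*}$, binary search can, using $q=O(\log(cn))$ queries, locate the first index $x$ with $F_{s^*}(x)=1$; since $F_{s^*}(x)=F(s^*+x)$ and $s^*\in[0,n]$, knowing this transition index together with the (finitely many, namely at most $t$) positions where "01" occurs in $F$ narrows $s^*$ down to a set $R$ of size at most $t$. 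This gives a (deterministic, hence trivially PD) Shift Finding algorithm with $q=O(\log(cn))$ queries whose canonical map outputs a set of size $t$ containing $s^*$.

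Next I would plug this into the reduction. Following the proof of \Cref{thm:single_F_connection_shiftFind}/\Cref{thm:generalized_connection_Shift_Counting}: amplify the PD streaming algorithm $A$ to success $1-1/(10q)$ by $O(\log q)=O(\loglog(cn))$ repetitions, have Alice encode her input $s^*$ by making $s^*$ insertions, send the state, and let Bob provide query access to $F_{s^*}$ by spawning fresh copies and inserting $x$ items. Bob simulates the binary-search Shift Finding algorithm, recovering a set $R\ni s^*$ of size $t$ with probability $9/10$ (union bound over the $q=O(\log(cn))$ queries, each failing with probability $\le 1/(10q)$). This solves $\prmsg$ over the alphabet $\Sigma=[0,n]$ but only up to identifying $s^*$ within a set of size $t$; equivalently it solves $\prmsg$ over an alphabet of size $\lceil (n+1)/t\rceil$ (by first restricting Alice's inputs to one representative per "block" of size $t$, or by a standard covering argument). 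By \Cref{lem:comm_problem_message}, Alice's message — whose length is the space of $A'$ — must be $\Omega(\log(n/t))$ bits, so $A$ itself uses $\Omega\!\big(\tfrac{\log(n/t)}{\loglog(cn)}\big)$ bits.

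I expect the main obstacle to be two bookkeeping points rather than anything deep. First, making the "output a set of size $t$" reduction genuinely yield a $\prmsg$ lower bound of $\Omega(\log(n/t))$: one must ensure the candidate set $R$ is a deterministic function of $(F, \text{transition index})$ so that distinct $\prmsg$ inputs lying in distinct sets are truly distinguished — this is handled by noting binary search is deterministic and $R$ is computed from $F$ (known to Bob) together with the located transition, and then choosing Alice's input alphabet to be a maximal set of shifts no two of which share a candidate set, which has size $\Omega(n/t)$. Second, the descents "10" in $F$: I would note that the binary search is looking for the \emph{first} $1$, so only the leftmost "01"-type transition in the relevant window matters, and the at-most-$t$ bound on "01" occurrences still bounds the number of possible values of $s^*$ consistent with any observed first-transition index by $t$; alternatively, restrict attention to the first maximal run of zeros of $F_{s^*}$, whose right endpoint is $n - s^* + (\text{length of first }0\text{-run of }P)$ up to the boundary, making the dependence on $s^*$ affine and the set $R$ easy to describe. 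Handling the boundary cases (e.g. $s^*$ so large that the $n-s^*$ leading zeros of $F_{s^*}$ are exhausted inside $P$, or $P$ starting with a $1$) is routine but needs a line or two.
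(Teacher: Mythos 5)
Your proposal is essentially the paper's own proof: a deterministic binary search that locates a ``01'' transition of $F_{s^*}$ in $O(\log (cn))$ queries, bucketing the shifts by the outcome (each bucket has size at most $t$ because the shift plus the located index must be one of the at most $t$ ``01'' occurrences in $F$), taking one representative per bucket as a MESSAGE alphabet of size at least $n/t$, and running the amplified ($O(\loglog (cn))$-repetition) protocol from the proof of \Cref{thm:single_F_connection_shiftFind} to get $\Omega(\tfrac{\log(n/t)}{\loglog (cn)})$. One small correction: on a non-monotone $F_{s^*}$ the binary search does not find the \emph{first} index with value $1$ (nor the leftmost transition), only \emph{some} ``01'' transition; this suffices, since your counting step only uses that the located index is a deterministic function of $s^*$ and that $s^*$ plus this index is a ``01'' occurrence in $F$, so the worries about descents and first zero-runs can simply be dropped.
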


\begin{proof}
The proof is by a reduction from problem MESSAGE, similarly to the proof of \Cref{thm:single_F_connection_shiftFind}.
Perhaps the most delicate part is the definition of an alphabet $\Sigma$ for the MESSAGE problem $\prmsg$, and it proceeds as follows.

Given $s\in[n]$, consider the following execution of Binary Search (B.S.)
on the function $F_s$. 
Initialize $l=0$ and $r=cn+1$,
and at every iteration query $F_{s}(\floor{\tfrac{l+r}{2}})$;
if $F_{s}(\floor{\tfrac{l+r}{2}})=0$,
then $l\gets\floor{\tfrac{l+r}{2}}$, otherwise $r\gets\floor{\tfrac{l+r}{2}}$.
These iterations maintain the invariant that $F_s(l)=0$ and $F_s(r)=1$,
and after at most $\log (cn)$ iterations arrive at $r=l+1$ with the pattern "01".
Define a mapping $M:[n]\to [cn]$ such that $M(s)$ is the location where the binary search finds a "01" in $F_s$, i.e., the final index $l$; thus $F(s+M(s))=0$ and $F(s+M(s)+1)=1$.

In order to define an alphabet $\Sigma$,
consider a partitioning of $[n]$ to buckets, defined
such that items $s,s'$ are from the same bucket $B$ if and only if they are mapped to the same value $M(s)=M(s')$.
For every bucket $B$ and every $s,s'\in B$,
we know from above that $F(s'+M(s))=0$ and $F(s'+M(s)+1)=1$,
so there are at most $t$ possibilities for $s'$ (one of which is $s'=s$), and thus the size of the bucket $|B|\leq t$.
Define $\Sigma\subset [n]$ by taking one representative from each bucket.
Thus, every $s_1\neq s_2\in \Sigma$ satisfy $M(s_1)\neq M(s_2)$ and
 $|\Sigma|\geq n/t$.

Let $A$ be a streaming algorithm whose canonical function is $F$ and let algorithm $A'$ be an amplification of algorithm $A$ that succeeds with probability $1-1/(10\log (cn))$ (by making $O(\loglog (cn))$ repetitions and taking the majority).
Consider an instance of the MESSAGE problem $\prmsg$,
and proceed similarly to the proof of \Cref{thm:single_F_connection_shiftFind}.
We provide a self-contained analysis for completeness.
Alice and Bob perform the following protocol.
Alice starts an execution of algorithm $A'$ using the shared randomness.
For input $s^*\in \Sigma$, she inserts $s^*$ stream items to algorithm $A'$ and sends the state (memory contents) of this algorithm $A'$ to Bob.
In order to get query access to $F_{s^*}$ at index $x$, 
Bob makes a fresh copy $A_0$ of algorithm $A'$, continues the algorithm's execution (using the shared randomness),
inserts $x$ stream items to algorithm $A_0$ and finally reads its output.
Bob uses this query access to simulate the B.S. algorithm on $F_{s^*}$ (with the goal of recovering $M(s^*)$).
He then infers which bucket corresponds to his result, and outputs the representative of that bucket (which is $s^*$ if he recovers $M(s^*)$).

If the B.S. algorithm were executed with true query access to $F_{s^*}$, then it would have output $M(s^*)$ and would have made a sequence of queries $X_{BS}$ to $F_{s^*}$.
This sequence depends only on $F_{s^*}$, and in particular
independent of the random coins of algorithm $A'$.
Thus by a union bound, algorithm $A'$ succeeds on all queries $x\in X_{BS}$ (i.e. outputs the corresponding $F_{s^*}(x)$) with probability at least $1-\log (cn)\cdot 1/(10\log (cn))=9/10$.
Hence, when Bob simulates the B.S. algorithm using the streaming algorithm $A'$, then with probability $9/10$ the execution is identical to running the B.S. algorithm with true query access to $F_{s^*}$.
Thus with this probability $9/10$, Bob recovers $M(s^*)$, and hence outputs $s^*$, which concludes the correctness analysis of the communication protocol.

     By Lemma~\ref{lem:comm_problem_message}, the message Alice sends must contain $\Omega(\log|\Sigma|)\geq \Omega(\log (n/t))$ bits, and thus algorithm $A'$ must use $\Omega(\log (n/t))$ bits of space.
    Recall that algorithm $A'$ is made of $O(\loglog (cn))$ copies of algorithm $A$ and thus algorithm $A$ must use $\Omega(\tfrac{\log (n/t)}{\loglog (cn)})$ bits of space.
\end{proof}

\begin{remark}
This proof can be easily generalized to prove \Cref{thm:generalized_connection_Shift_Counting}.
The first extension is
by replacing the B.S. algorithm and the corresponding buckets with any deterministic algorithm $Q$ that returns a subset containing $s^*$.
In order to generalize $Q$ to any PD algorithm $Y$,
consider the canonical function of $Y$ instead of the mapping $M$, and apply the same proof.
It holds because the crucial property of the B.S. algorithm was the existence of the mapping $M$.
Then by an additional union bound, both algorithms $Q$ and $A'$ succeed with probability $8/10$ (as in the proof of \Cref{thm:single_F_connection_shiftFind}).
\end{remark}

We now generalize \Cref{cl:hardness_few_01} to a larger family of patterns in $F$, where each pattern
is characterized by a parameter $k\in [n]$, and appears at index $x\in [0,(c+1)n-k]$ such that $F(x)=0$ and $F(x+k)=1$. 
These patterns are allowed to overlap with each other (for different values of $k$).
Denote such a pattern by "$0?^{k-1}1$", 
where each question mark can represent either $0$ or $1$,
and the number of question marks is $k-1<n$.
A copy of this pattern can be found in $O(\log \tfrac{n}{k})$ queries to $F_{s^*}$ by a binary search on the grid $(0,k,...,\ceil{\tfrac{cn}{k}}k)$, since $F_{s^*}(0)=0$ and $F_{s^*}(\ceil{\tfrac{cn}{k}}k)=1$.
Hence, if there exists $k$ for which this pattern appears at most $t$ times in $F$, then the communication protocol above can be adjusted to imply that algorithm $A$ must use at least $\Omega(\tfrac{\log (n/t)}{\loglog (cn/k)})\geq \Omega(\tfrac{\log (n/t)}{\loglog (cn)})$ bits of space.
The only change in the proof is in the number of queries that Bob makes,
which affects the number of repetitions in algorithm $A'$,
and thus only affects the $\loglog$ term.

\begin{corollary}\label{cor:scenario_1}
    If for some $k\leq n$ the pattern "$0?^{k-1}1$" appears at most $t$ times in $F$, then every PD streaming algorithm for problem $\prac$ whose canonical function is $F$, must use $\Omega(\tfrac{\log (n/t)}{\loglog (cn)})$ bits of space.

\end{corollary}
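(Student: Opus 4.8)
The plan is to reuse the three-hop reduction of \Cref{cl:hardness_few_01} (PD counting $\to$ query access to $F_{s^*}$ $\to$ MESSAGE), which is precisely the case $k=1$, and to change only the search used to define the MESSAGE alphabet. Fix $k\le n$ for which the pattern "$0?^{k-1}1$" occurs at most $t$ times in $F$. For $s\in[n]$, I would define $M(s)$ to be the output of the following deterministic binary search on $F_s$ restricted to the grid of multiples of $k$: initialize $l=0$ and $r$ to a fixed multiple of $k$ at which $F_s$ equals $1$, so that $F_s(l)=0$, $F_s(r)=1$, and $r-l$ is a multiple of $k$; then repeatedly query $F_s$ at the multiple of $k$ closest to the midpoint of $[l,r]$ and replace $l$ or $r$ by it according to whether the answer is $0$ or $1$, until $r=l+k$. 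This preserves the invariant $F_s(l)=0$, $F_s(r)=1$, and terminates after $O(\log(cn/k))$ queries with $F(s+l)=0$ and $F(s+l+k)=1$; set $M(s):=l$, so that $s+M(s)$ is an occurrence of "$0?^{k-1}1$" in $F$.

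I would then build the MESSAGE alphabet exactly as in \Cref{cl:hardness_few_01}: partition $[n]$ into buckets by the rule that $s$ and $s'$ are in the same bucket if and only if $M(s)=M(s')$, and let $\Sigma\subseteq[n]$ contain one representative of each bucket. The essential counting step carries over: if $M(s)=M(s')=x$ then $F(s'+x)=0$ and $F(s'+x+k)=1$, so $s'+x$ is one of the at most $t$ positions at which "$0?^{k-1}1$" occurs in $F$, leaving at most $t$ choices for $s'$; hence every bucket has size at most $t$ and $|\Sigma|\ge n/t$.

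The communication protocol and its analysis are then essentially those of \Cref{cl:hardness_few_01}, the sole difference being that Bob simulates the grid-$k$ search, which makes $q=O(\log(cn/k))$ queries instead of $O(\log(cn))$. Thus Alice runs an amplified copy $A'$ of $A$ (an $O(\loglog(cn/k))$-fold independent repetition with majority vote, so each invocation errs with probability at most $1/(10q)$), inserts her input $s^*\in\Sigma$ as stream items, and sends the state of $A'$; Bob answers a query for $F_{s^*}(x)$ by forking a fresh copy of $A'$, inserting $x$ more items, and reading its output; since the query sequence of the search depends only on $F_{s^*}$ and not on $A'$'s coins, a union bound over the $q$ queries gives that all of Bob's answers are correct with probability $9/10$, so Bob recovers $M(s^*)$ and outputs the representative of its bucket, which is $s^*$. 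By \Cref{lem:comm_problem_message} Alice must send $\Omega(\log|\Sigma|)=\Omega(\log(n/t))$ bits, so $A'$ uses $\Omega(\log(n/t))$ bits of space and therefore $A$ uses $\Omega\big(\tfrac{\log(n/t)}{\loglog(cn/k)}\big)\ge\Omega\big(\tfrac{\log(n/t)}{\loglog(cn)}\big)$ bits.

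The only place the argument differs nontrivially from the $k=1$ case — and the step I expect to need the most care — is making the grid-$k$ search well defined uniformly over the relevant values of $s$: the initial right endpoint must be a multiple of $k$ at which $F_s$ is guaranteed to equal $1$, while every queried index must remain inside the domain $[0,(c+1)n-s]$ of $F_s$. This can be arranged by choosing the endpoint carefully (for instance, the least multiple of $k$ exceeding $cn$, after possibly discarding from $\Sigma$ a few extreme values of $s$), which changes $|\Sigma|$ only by a lower-order term in the parameter regime relevant to our applications and leaves the rest of the proof identical to that of \Cref{cl:hardness_few_01}.
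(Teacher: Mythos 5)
Your proposal is correct and is essentially the paper's own proof: the paper likewise just reruns the argument of \Cref{cl:hardness_few_01} with the binary search performed on the grid of multiples of $k$ (between $0$ and roughly $\lceil cn/k\rceil k$), so that the only change is Bob's query count and hence the $\loglog$ factor coming from the amplification of $A$. One small caveat about your final paragraph: the suggested fallback of ``discarding a few extreme values of $s$'' is not harmless as stated, since for large $k$ the set of $s$ with $s+\lceil cn/k\rceil k>(c+1)n$ can have size $\Theta(k)=\Theta(n)$; but no discarding is needed at all, because the right endpoint of the search is never actually queried (every queried grid point is strictly below $cn$, so every simulated stream has length at most $(c+1)n$), and at termination the verified value $F(s+l)=0$ forces $s+l\le cn$, hence $s+l+k\le cn+k\le (c+1)n$ and $F(s+l+k)=1$, so $M(s)$ always points to a genuine occurrence of the pattern inside the domain of $F$ and the bucket-counting step goes through for every $s\in[n]$.
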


\subsection{Scenario Two}

In this scenario, for every $k\leq n$ the pattern "$0?^{k-1}1$" appears at least $t$ times in $F$.

\begin{lemma}\label{lem:scenario_2}
    If for all $k\in [n]$, the pattern "$0?^{k-1}1$"  appear at least $t$ times in $F$, then every PD streaming algorithm for problem $\prac$ whose canonical function is $F$, must use $\Omega(\tfrac{\log n}{\log(cn/t)+\loglog n})$ bits of space.
\end{lemma}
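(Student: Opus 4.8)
The plan is to treat Scenario Two as the ``random-query'' regime: under the hypothesis, I will build a randomized algorithm for Shift Finding \emph{with respect to this specific $F$} that uses only $q=O\big(\tfrac{cn\log n}{t}\big)$ uniformly random queries and recovers $s^*$ with probability $9/10$, and then invoke \Cref{thm:single_F_connection_shiftFind}. That theorem then forces any PD streaming algorithm with canonical function $F$ to use $\Omega\big(\tfrac{\log n}{\log q}\big)$ bits; since $\log q = O(\log(cn/t)+\loglog n)$, this is exactly the claimed bound.

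The heart of the argument is a combinatorial claim: for every pair $0\le s_1<s_2\le n$, writing $k:=s_2-s_1\in[n]$, the shifted functions disagree on at least $t$ points, i.e.\ $|\{x\in[0,cn]:F_{s_1}(x)\ne F_{s_2}(x)\}|\ge t$. To prove it, take any of the $\ge t$ occurrences of the pattern ``$0?^{k-1}1$'' in $F$, say at index $z$, so $F(z)=0$ and $F(z+k)=1$. Because $F\equiv 1$ on $[cn+1,(c+1)n]$, the equality $F(z)=0$ gives $z\le cn$; because $F\equiv 0$ on $[0,n]$ and $k\le n$, the equality $F(z+k)=1$ forces $z+k\ge n+1$, hence $z\ge n-k+1=n-s_2+s_1+1\ge s_1+1$, where the last step uses $s_2\le n$. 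So $z\in[s_1+1,cn]$ and $x:=z-s_1\in[1,cn]$ is a legal query with $F_{s_1}(x)=F(z)=0\ne 1=F(z+k)=F(s_1+k+x)=F_{s_2}(x)$; distinct $z$ give distinct $x$, proving the claim. Note the key point is that every pattern occurrence lies automatically inside the query window $[s_1,s_1+cn]$ — the upper end via $F(z)=0\Rightarrow z\le cn$, the lower end via the promise $s_2\le n$.

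Given the claim, the Shift Finding algorithm samples $q:=\lceil 3(cn+1)\ln n/t\rceil$ points $x_1,\dots,x_q$ independently and uniformly from $[0,cn]$, queries $F_{s^*}$ at each, and outputs any $s\in[0,n]$ consistent with all the answers (there is always one, namely $s^*$). For a fixed $s\ne s^*$ the claim gives $p:=\Pr_x[F_s(x)\ne F_{s^*}(x)]\ge t/(cn+1)$, so the probability that none of the $q$ samples separates $s$ from $s^*$ is $(1-p)^q\le e^{-pq}\le e^{-3\ln n}=n^{-3}$; a union bound over the at most $n+1$ candidates $s\ne s^*$ shows the output equals $s^*$ with probability $\ge 9/10$. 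Plugging $q=O(cn\log n/t)$, hence $\log q=O(\log(cn/t)+\loglog n)$, into \Cref{thm:single_F_connection_shiftFind} gives the lemma. (If $q$ would exceed $cn+1$, just query all of $[0,cn]$ deterministically; this happens only when $t<3\ln n$, and then $\log(cn)=O(\log(cn/t)+\loglog n)$ as well, so the conclusion is the same. For $n$ below an absolute constant the bound is vacuous.)

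The step I expect to require the most care — though it turns out to be mild — is the combinatorial claim, specifically ruling out that occurrences of ``$0?^{k-1}1$'' cluster near the all-zero prefix or all-one suffix of $F$, outside the relevant query window. Everything else is a textbook random-sampling union bound together with a black-box application of \Cref{thm:single_F_connection_shiftFind}.
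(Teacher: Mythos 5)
Your proposal is correct and follows essentially the same route as the paper: the same combinatorial claim that any two distinct shifts disagree on at least $t$ of the $\ge t$ pattern occurrences (with the identical window argument using $F\equiv 0$ on $[0,n]$, $F\equiv 1$ beyond $cn$, and $s_2\le n$), followed by $O(cn\log n/t)$ uniform random queries with a union bound, and a black-box application of \Cref{thm:single_F_connection_shiftFind}. The only differences are cosmetic (i.i.d.\ sampling with a consistency check versus the paper's iterative filtering of the candidate set, plus your explicit handling of the $q>cn+1$ edge case).
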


\begin{proof}
In this case, there is an algorithm for the Shift Finding problem $\prsf$ using $q=O(\tfrac{cn\log n}{t})$ queries to $F_{s^*}$, as follows.

\begin{enumerate}
\item let $S=[0,n]$
\item repeat the following $\tfrac{10cn\log n}{t}$ times:
  \begin{enumerate}
  \item pick $r\in[cn]$ uniformly at random and query $F_{s^*}(r)$
  \item let $S\gets \{s\in S:\ F(s+r)= F_{s^*}(r) \}$
  \end{enumerate}
\item if $|S|=1$, return $s\in S$; else return FAIL
\end{enumerate}

The final set $S$ clearly contains the shift $s^*$.
It remains to show that all $s\neq s^*$ are removed from the set $S$ with high probability.

Fix $s\in [n], s\neq s^*$. 
There are $t$ values for $r\in[cn]$ for which $F(s^*+r)\neq F(s+r)$, as follows.
Assume without loss of generality that $s^*<s$ and denote $k = s-s^*\in [n]$.
Let $l$ be a location that corresponds to the pattern "$0?^{k-1}1$" in $F$, i.e. $F(l)=0$ and $F(l+k)=1$.
If $l\in [s^*+1,s^*+cn]$, then there is $r\in[cn]$ such that $s^*+r=l$, for which $F(s^*+r)=0\neq F(l+k) = F(s+r)$.
There are at least $t$ locations for this pattern (i.e. possible values for $l$), thus it remains to show that indeed $l\in [s^*+1,s^*+cn]$.
It must be that $l+k> n$ since $F(x)=0$ for all $x\leq n$, and similarly $l\leq cn$ since $F(x)=1$ for all $x> cn$.
Hence $l\in [n-k+1,cn]\subset [s^*+1,s^*+cn]$,
and thus there are $t$ values for $r\in[cn]$ for which $F(s^*+r)\neq F(s+r)$ (each value for $r$ corresponds to a possible value for $l$).

Thus, in each repetition, $s$ is removed from the set $S$ with probability at least $\tfrac{t}{cn}$.
The probability $s$ is not removed after $\tfrac{10cn\log n}{t}$ repetitions
is $(1-\tfrac{t}{cn})^{(10cn\log n)/t}<\tfrac{1}{n^2}$.
By a union bound, all $s\neq s^*$ are removed with probability $1-\tfrac{1}{n}$,
which concludes the correctness analysis of the algorithm for problem $\prsf$.

By \Cref{thm:single_F_connection_shiftFind}, every PD streaming algorithm for the approximate counting problem $\prac$ with a canonical function $F$ 
must use $\Omega(\tfrac{\log n}{\log((cn\log n)/t)})$ bits of space.

\end{proof}

\subsection{Concluding the Proof of Theorem~\ref{thm:LB_PD_counting}}

Concluding the two scenarios, set $t=n/2^{\sqrt{\log n \cdot \log\log (cn)}}$ and get by Corollary~\ref{cor:scenario_1} and Lemma~\ref{lem:scenario_2} that every PD streaming algorithm for the approximate counting problem $\prac$ must use
\[
\Omega(\min\{\tfrac{\log (n/t)}{\loglog (cn)}, \tfrac{\log n}{\log ((cn/t)\log n)}\})
= \Omega(\tfrac{\log n}{\sqrt{\log n\loglog (cn)} + \log c})
\]
bits of space, which boils down to $\Omega(\sqrt{\tfrac{\log n}{\loglog n}})$ for $c<2^{\sqrt{\log n \loglog n}}$.


\section{Shift Finding Algorithm}\label{sec:shift_finding_alg}

One can hope to prove tighter lower bounds for PD streaming algorithms for the approximate counting problem $\prac$, and a possible approach is by solving the Shift Finding problem $\prsf$ using $\polylog n$ queries.
Recall that in problem $\prsf$, the input is a string $P\in\set{0,1}^{(c-1)n}$, 
which can be represented by a string $F$ which is a concatenation of $n$ zeros, $P$ and then $n$ ones; and query access to a shifted version of $F$ with shift $s^*$, denoted $F_{s^*}$.
As stated in Theorem~\ref{thm:shift_finding_alg},
we show a deterministic algorithm for problem $\prsf$ using $O(\sqrt{cn})$ queries (Algorithm~\ref{alg:sqrt_n_shift_finding}), and we leave open the question whether it is the right bound.
 The proof relies on an efficient verification algorithm that for input $s$, uses $2$ queries and returns 'yes' if and only if $s=s^*$, as stated in Lemma~\ref{lem:shift_verifier_witness_2queries} and described next.
\begin{proof}[Proof of Lemma~\ref{lem:shift_verifier_witness_2queries}]
    Denote by $l\in [n+1,cn+1]$ the smallest number such that $F(l)=1$, and by $r\in[n,cn]$ the largest number such that $F(r)=0$.
    For input $s\in[0,n]$, the verification algorithm
    returns 'no' if $F_{s^*}(l-s)=0$ or $F_{s^*}(r-s)=1$, and
    otherwise returns 'yes'.

    If $s=s^*$, then $F_{s^*}(x-s)=F(x)$ and the verification algorithm outputs 'yes'.
    If $s> s^*$, then $s^*-s+l<l$ and thus $F_{s^*}(l-s) = F(s^*-s+l)=0$ 
    and the verification algorithm outputs 'no'.
    Similarly, if $s< s^*$ then $F_{s^*}(r-s)=1$ and the verification algorithm outputs 'no'.
\end{proof}

\begin{remark}
There is a randomized algorithm for problem $\prsf$ using $\tilde{O}_c(\sqrt{n})$ queries that is similar to
the proof of Theorem~\ref{thm:LB_PD_counting} in \Cref{sec:LB_PD_counting}.
It proceeds by considering those two scenarios.
In scenario one, instead of constructing the set $\Sigma$, query witnesses for all the $t$ possible shifts using $2t$ queries and hence recover the unknown shift $s^*$.
In scenario two, the proof of Theorem~\ref{thm:LB_PD_counting} shows how to find the unknown shift $s^*$ in $O(\tfrac{cn}{t}\log n)$ queries with high probability.
Hence, by setting $t=\sqrt{cn\log n}$, this algorithm finds the unknown shift in $O(\max \{t+\log (cn), \tfrac{cn}{t}\log n\}) \leq O(\sqrt{cn\log n})$ queries with high probability.
\end{remark}

Next is a slight improvement, 
a deterministic algorithm in $O(\sqrt{cn})$ queries, proving Theorem~\ref{thm:shift_finding_alg}.

\begin{algorithm}
    \caption{Deterministic Shift Finding in $O(\sqrt{cn})$ queries}\label{alg:sqrt_n_shift_finding}

    \begin{algorithmic}[1]
        \Require $n,c,F$ and query access to $F_{s^*}$
        \Ensure $s^*$

        \State $Q \gets (F_{s^*}(0),F_{s^*}(\sqrt{cn}),F_{s^*}(2\sqrt{cn}),...,F_{s^*}(cn))$
        \State let $S\gets\set{s\in[0,n]:\forall i\in[0,\sqrt{cn}], F_s(i\sqrt{cn})=Q(i)}$  \Comment{i.e. the set of all shifts that could have produced $Q$}
        \For{$s\in S$}
            \State check the witness of $s$
            \If{$s=s^*$} return $s$
            \EndIf
        \EndFor
    \end{algorithmic}
\end{algorithm}

\begin{lemma}
    The set $S$ in Algorithm~\ref{alg:sqrt_n_shift_finding} is of size $O(\sqrt{cn})$.
\end{lemma}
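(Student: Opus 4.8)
The plan is to bound $|S|$ by a residue argument modulo the grid spacing. Write $d:=\sqrt{cn}$ (integral, as Algorithm~\ref{alg:sqrt_n_shift_finding}'s description assumes), so the queries are made at $G=\{0,d,2d,\dots,d^2\}$ and $|G|=d+1$. For a shift $s$, call $v(s):=(F(s),F(s+d),\dots,F(s+cn))$ its \emph{signature}, so that $S=\{s\in[0,n]:v(s)=v(s^*)\}$. I will show that $S$ meets each residue class modulo $d$ in $O(1)$ points, which gives $|S|=O(d)=O(\sqrt{cn})$.

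Fix $\rho\in\{0,\dots,d-1\}$ and let $F^{(\rho)}$ be the subsequence of $F$ obtained by stepping by $d$ from offset $\rho$, namely $F^{(\rho)}[i]:=F(\rho+id)$. A shift in $[0,n]$ congruent to $\rho$ modulo $d$ has the form $s=\rho+\ell d$ with $0\le\ell\le n_\rho-1$, where $n_\rho:=\lfloor(n-\rho)/d\rfloor+1\ge1$, and then $v(s)$ is exactly the window $(F^{(\rho)}[\ell],F^{(\rho)}[\ell+1],\dots,F^{(\rho)}[\ell+d])$ of length $d+1$. Here I would invoke the shape of $F$ (identically $0$ on $[0,n]$ and identically $1$ on $(cn,(c+1)n]$): the first $n_\rho$ entries of $F^{(\rho)}$ are $0$ (their indices $i$ satisfy $\rho+id\le n$), and every entry of index $\ge d+1$ equals $1$ (there $\rho+id>d^2=cn$); between them lies a string $w^{(\rho)}:=(F^{(\rho)}[n_\rho],\dots,F^{(\rho)}[d])$ depending only on $F$ and $\rho$. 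Since $\ell\le n_\rho-1$ the window begins in the zero-prefix, and for $\ell\ge1$ it runs past $w^{(\rho)}$ into the trailing ones; hence
\[
v(\rho+\ell d)=0^{\,n_\rho-\ell}\,w^{(\rho)}\,1^{\,\ell}\quad(1\le\ell\le n_\rho-1),\qquad v(\rho)=0^{\,n_\rho}\,w^{(\rho)}.
\]

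To finish, I would note that for a fixed string $w$ a given string has at most one representation $0^{a}w\,1^{b}$ with $a\ge1$ (strip the common block of $\ge1$ leading zeros, recurse when $w$ starts with $0$, and compare lengths when $w$ is all zeros). Thus among the shifts $\rho+\ell d\in[0,n]$ with $\ell\ge1$, at most one has signature $v(s^*)$ — its signature $0^{\,n_\rho-\ell}w^{(\rho)}1^{\,\ell}$ (leading block $n_\rho-\ell\ge1$) determines $\ell$ — and adding in the single shift $s=\rho$ gives $|S\cap(\rho+d\mathbb{Z})|\le2$, so $|S|\le2d=O(\sqrt{cn})$; in fact $v(\rho)$ has strictly fewer trailing ones than any $\ell\ge1$ signature in its class, so really $|S|\le d$. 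The one delicate point — and, I expect, the only real obstacle — is the index bookkeeping that makes the window reach the trailing ones: one must check $n_\rho\le d$ (so $w^{(\rho)}$ is well-defined) and $\ell+d\le\lfloor((c+1)n-\rho)/d\rfloor$ (so the window stays in $F^{(\rho)}$'s domain), both of which use $c>1$, and note that the boundary entry $F^{(\rho)}[d]=F(\rho+cn)$ is forced to $1$ only for $\rho\ge1$, which is why the class $\rho=0$ may need the extra ``$+1$''. The non--perfect-square case is handled by using $d=\lceil\sqrt{cn}\rceil$ and re-verifying the same inequalities, which affects only constants.
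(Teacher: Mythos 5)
Your argument is correct and takes essentially the same route as the paper: both group the candidate shifts by residue modulo the grid spacing $\sqrt{cn}$ and use the fact that $F$ begins with zeros and ends with ones to show each class contributes $O(1)$ elements of $S$. The only difference is the finishing step within a class — you separate the explicit signatures $0^{\,n_\rho-\ell}\,w^{(\rho)}\,1^{\ell}$ (whose uniqueness is immediate just by counting ones, a cleaner justification than your stripping/recursion sketch), whereas the paper deduces from two equal signatures that $Q$ is periodic and contradicts $Q(0)=0$ using the all-ones tail; both give $|S|=O(\sqrt{cn})$.
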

\begin{proof}
    Assume by contradiction that $|S|\geq \sqrt{cn}+1$.
    Hence by the pigeonhole principle, there exists $s_1<s_2\in S$ such that $s_1=s_2 \mod \sqrt{cn}$.
    Hence for all $i\in[0,\sqrt{cn}-\tfrac{s_2-s_1}{\sqrt{cn}}]$,
    \begin{align*}
        Q(i) = F_{s_2}(i\sqrt{cn})
            = F_{s_1}(s_2-s_1 + i\sqrt{cn})
            = Q(\tfrac{s_2-s_1}{\sqrt{cn}} + i),
    \end{align*}
    where the first and last transitions hold since $s_1,s_2\in S$ and $\tfrac{s_2-s_1}{\sqrt{cn}}$ is an integer number, and the second transition is by definition.
    Thus $Q$ has a period of length $\tfrac{s_2 - s_1}{\sqrt{cn}}\leq \lfloor\tfrac{s_2}{\sqrt{cn}}\rfloor$.
    However, for $i\in [\sqrt{cn}-\lfloor\tfrac{s_2}{\sqrt{cn}}\rfloor +1 , \sqrt{cn}]$ the values that $Q$ get are $Q(i) = F_{s_2}(i\sqrt{cn})=1$ since $s_2 + i\sqrt{cn}\geq cn$;
    thus all entries in $Q$ are equal $1$, which contradicts the fact that $Q(0)=0$, and thus completes the proof.
\end{proof}

Algorithm~\ref{alg:sqrt_n_shift_finding} returns the shift $s^*$ since $s^*\in S$ and by the correctness of the verifier in Lemma~\ref{lem:shift_verifier_witness_2queries}.
The number of queries Algorithm~\ref{alg:sqrt_n_shift_finding} makes is $O(|S|+|Q|)=O(\sqrt{cn})$, which proves Theorem~\ref{thm:shift_finding_alg}.


\ifLIPICS
    \bibliographystyle{plainurl}
\else
    \bibliographystyle{alphaurl}
\fi
\bibliography{references.bib}
\end{document}